\numberwithin{equation}{section}
\newcommand{\beqa}{\begin{eqnarray}}
\newcommand{\eeqa}{\end{eqnarray}}
\newcommand{\rf}[1]{(\ref{#1})}
\newtheorem{theorem}{Theorem}[section]
\newtheorem{proposition}{Proposition}[section]
\newtheorem{corollary}{Corollary}[section]
\theoremstyle{remark}
\newtheorem{identity}{Identity}
\newcommand{\End}{\operatorname{End}}
\numberwithin{equation}{section}
\begin{document}

\begin{flushright}
LPENSL-TH-03-19
\end{flushright}

\bigskip
\bigskip

\begin{center}
\textbf{\LARGE On quantum separation of variables beyond fundamental
representations}

\vspace{50pt}

{\large \textbf{J.~M.~Maillet}\footnote[1]{{ {Univ Lyon, Ens de Lyon,
Univ Claude Bernard Lyon 1, CNRS, Laboratoire de Physique, UMR 5672, F-69342
Lyon, France; maillet@ens-lyon.fr}}} \textbf{\, and} ~~ \textbf{G. Niccoli}%
\footnote[2]{{ {Univ Lyon, Ens de Lyon, Univ Claude Bernard Lyon 1,
CNRS, Laboratoire de Physique, UMR 5672, F-69342 Lyon, France;
giuliano.niccoli@ens-lyon.fr}}}}
\end{center}

\vspace{40pt}

\begin{itemize}
\item[ ] \textbf{Abstract.}\thinspace\ We describe the extension, beyond fundamental
representations of the Yang-Baxter algebra, of our new construction of separation of variables bases for quantum integrable lattice models. The key idea underlying our approach is to use the commuting conserved charges of the quantum
integrable models to generate bases in which their spectral problem is 
\textit{separated}, i.e. in which the wave functions are factorized in terms of
specific solutions of a functional equation. 
For the so-called "non-fundamental" models we construct two different types of SoV bases. The first is given from the fundamental quantum  Lax operator having isomorphic auxiliary and quantum spaces and that can be obtained by fusion of the original quantum Lax operator. The construction essentially follows the one we used previously for fundamental models and allows us to derive the simplicity and diagonalizability of the transfer matrix spectrum. Then, starting from the original quantum Lax operator and using the full tower of the fused transfer matrices, we introduce a second type of SoV bases  for which the proof of the
separation of the transfer matrix spectrum is naturally derived. We show
that, under some special choice, this second type of SoV bases coincides with
the one associated to Sklyanin's approach. Moreover, we derive the finite difference type (quantum spectral
curve) functional equation and the set of its solutions defining the complete
transfer matrix spectrum. This is explicitly implemented for the integrable
quantum models associated to the higher spin representations of the general
quasi-periodic $Y(gl_{2})$ Yang-Baxter algebra. 
Our SoV approach also leads to the construction of a $Q$-operator 
in terms of the fused transfer matrices. Finally, we show 
that  the $Q$-operator family can be equivalently used as the
family of commuting conserved charges enabling to construct our SoV bases.
\end{itemize}

\clearpage

%%% TABLE OF CONTENTS %%%
\noindent\rule{\linewidth}{1pt}
\tableofcontents 
\vspace{10pt}
\noindent\rule{\linewidth}{1pt}
\newpage

\section{Introduction}

In this article we continue the development of our approach \cite{MaiN18,MaiN18a,MaiN18b}  to generate
the separation of variables (SoV) complete characterization of the spectrum of quantum integrable  lattice
models. We use the framework of the quantum inverse scattering method \cite{FadS78,FadST79,FadT79,Skl79,Skl79a,FadT81,Skl82,Fad82,Fad96} and its associated Yang-Baxter algebra. Let us stress that in this context, the quantum
version of the separation of variables has been pioneered by E. K. Sklyanin in a series of beautiful seminal works  \cite{Skl85,Skl90,Skl92,Skl92a,Skl95,Skl96}. The main motivation to develop this new paradigm was to overcome several difficulties in applying the algebraic Bethe ansatz (ABA) in several important cases like the open Toda chain, in particular linked to the absence of an obvious reference state, see e.g. \cite{Skl85}. More conceptually, it was also designed to have a resolution scheme at the quantum level that would be the analog of the standard Hamilton-Jacobi method in classical Hamiltonian mechanics, see e.g. \cite{Arn76L}. In particular, the main feature of the SoV method is that it is not an ansatz. As such, it leads to the possibility to find the complete characterization of the spectrum of quantum integrable models, this completeness question being in general a difficult task within the ABA method, see e.g.  \cite{TarV95,MukTV09b}.

The key ingredient of 
Sklyanin's approach is the construction, from the generators of the Yang-Baxter algebra, 
of two operator families $B(\lambda )$ and $A(\lambda )$, depending on a complex spectral parameter $\lambda \in {\bf C}$,   and satisfying the following properties. The $B$-family must be a commuting family of simultaneously diagonalizable operators having simple spectrum. The separate variables are then given by the complete set of commuting operators $Y_n$ such that $B(Y_n) = 0$. Their common eigenbasis
defines the SoV basis. The $A$-family forms also a continuous set  of commuting operators, with simple spectrum,  which, thanks to their commutation relations with the $B$-family stemming from the Yang-Baxter algebra, define the shift operators over the spectrum of the separate variables. Moreover, the $A$-family and the transfer
matrices of the model satisfy over the spectrum of the separate variables closed characteristic equations (the analog of the Hamilton-Jacobi equations), the so-called quantum spectral curve equation that characterizes the spectrum of the given model. 

This beautiful Sklyanin's picture for the construction of the SoV basis therefore requires
 the proper identification of the operator families $B(\lambda )$ and $%
A(\lambda )$ and the proof that they indeed satisfy all the outlined required
properties. Sklyanin has proposed how to construct these operator families for a large class of models associated to the
representation of the 6-vertex Yang-Baxter algebra and even for the higher rank cases. Since then, this method 
has been
successfully implemented, and in some cases partially generalized, to achieve
the complete spectrum characterization of several classes of integrable
quantum models mainly associated to different representations of the
6-vertex and 8-vertex Yang-Baxter algebras and reflection algebras as well
as to their dynamical deformations \cite{BabBS96,Smi98a,Smi01,DerKM01,DerKM03,DerKM03b,BytT06,vonGIPS06,FraSW08,MukTV09,MukTV09a,MukTV09c,AmiFOW10,NicT10,Nic10a,Nic11,FraGSW11,GroMN12,GroN12,Nic12,Nic13,Nic13a,Nic13b,GroMN14,FalN14,FalKN14,KitMN14,NicT15,LevNT15,NicT16,KitMNT16,MarS16,KitMNT17,MaiNP17,MaiNP18}. Despite its many successes, this construction does not seem however to be completely universal; in particular, some difficulties arise already for the proper identification of the $A$ operator family for the fundamental representations of the higher rank Yang-Baxter algebras, see e.g. \cite{MaiN18}, although there was recently progresses to identify the proper $B$ operator, its spectrum, and how it can be used to obtain some eigenstates of the transfer matrix, see \cite{GroLMS17, LiaS18,RyaV19}, however still insufficient to realize the full SoV program. 

This motivated us to look for a different way for constructing the SoV basis that would not rely on finding such two families of operators $A$ and $B$. The new idea underlying our approach is to use the action of a well chosen set of 
commuting conserved charges on some generic co-vector  to generate an Hilbert space  
basis in which their spectral problem is separated. In all the models we considered so far  \cite{MaiN18,MaiN18a,MaiN18b} this  set is generated by the transfer matrix itself and provides effectively an SoV basis in which its spectrum can be characterize completely. In particular in such a basis the eigenvectors of the transfer matrix have coordinates given by the products of the corresponding transfer matrices eigenvalues. Hence, the resolution of the spectral problem determines not only the eigenvalues but also gives an algebraic construction of the corresponding eigenvectors in this SoV basis, which is a quite remarkable feature. It is to be emphasized that, in our approach, the SoV basis is directly generated by the quantum symmetries of the considered integrable model. Of course, such a
program requires the proof of two main non-trivial steps.  First, by using an
appropriate set of commuting conserved charges we have to show that we can indeed construct a
basis of the given space of the representation of the Yang-Baxter algebra. Second, we need to prove that their spectral problem is indeed separated in this basis. It means that all the wavefunctions should have a factorized form in terms of a well defined class of solutions of an appropriate functional equation (the quantum spectral curve equation).  This amounts in fact to get the action of the transfer matrix in this basis which is itself given in terms of the transfer matrix action on the generating co-vector. It turns out that this is equivalent to identify the structure constants of the associative and commutative algebra of the conserved charges generated by the transfer matrices. Again, in all cases we have considered, these structure constants can be computed from the set of fusion relations satisfied by the transfer matrix and the associated quantum determinant evaluated in some specific points that in fact determines the separate variables. 

In our previous articles \cite{MaiN18,MaiN18a,MaiN18b}, we have considered quantum integrable lattice 
models associated to fundamental representations of the Yang-Baxter algebra  $Y(gl_{n})$ and $U_q(gl_n)$ for arbitrary integer $n \ge 2$. In particular, in our first paper \cite{MaiN18}  we have presented how our new
approach to construct the SoV bases for integrable quantum models
associated to the fundamental representations of $Y(gl_{n})$ with the most
general quasi-periodic boundary conditions, and for some simple
generalizations of them. Then, we have used it to obtain the explicit and
complete characterization of the transfer matrix spectrum first for the cases
of $Y(gl_{2})$ and $U_{q}(sl_{2})$ and then for $Y(gl_{3})$. In our
second paper these results on the complete transfer matrix spectrum have
been extended to the case $Y(gl_{n})$, for any integer $n\geq 2$ while in our third article, we have obtained similar results for the  $U_q(gl_n)$ case. In our two first papers for the fundamental representations of $Y(gl_{n})$, $n\geq2$, we have identified a natural choice of the set of the commuting
conserved charges and as well characterized the \textit{generating co-vector}
to be used as starting point to generate our SoV basis. This has been
motivated by the consequent simplicity of the proof that this system of
co-vectors forms indeed a basis of the Hilbert space and that the spectrum of the transfer matrix is indeed separated in such a basis. The results are
the introduction of the so-called quantum spectral curve and the exact
characterization of the set of its solutions which generates the complete
transfer matrix spectrum associating to any solution exactly one nonzero
eigenvector up to trivial normalisation. These results allow also to point out how the SoV basis in our
construction can be equivalently obtained by the action of the Baxter's $Q$-operator
family \cite{Bax73-tot,Bax73-tota,Bax73-totb,Bax73,Bax76,Bax77,Bax78,PasG92,BatBOY95,YunB95,BazLZ97,AntF97,BazLZ99,Der99,Pro00,Kor06,Der08,BazLMS10,Man14,MT15,BooGKNR14,BooGKNR16,BooGKNR17} satisfying with the transfer matrices the quantum spectral curve equation. In our first paper \cite{MaiN18} we have also shown that, under some specific choice of the co-vector, our SoV
basis coincides with Sklyanin's SoV basis, when Sklyanin's approach
applies, for integrable quantum models associated to rank one Yang-Baxter
algebra. Based on our analysis of the $Y(gl_3)$ case, we also conjectured (and verified on small size chains) that the same should hold for the higher rank
cases as well, the recent analysis \cite{GroLMS17,LiaS18,RyaV19} confirming such a statement for $Y(gl_n)$.\\

The aim of the present article is to show how our method works in cases going beyond the fundamental representations. Our interest in this situation, besides broadening the application of our method, is to 
understand and explain how our SoV construction works when we have at our
disposal a richer structure of commuting conserved charges. As the current
paper is mainly addressed to explain these features, we have chosen to consider the
simplest example in this class, namely quantum integrable models associated to the higher spin
representations of the rational 6-vertex Yang-Baxter algebra, i.e. $%
Y(gl_{2})$. In doing so we also solve the case associated to the most
general quasi-periodic boundary conditions\footnote{%
Indeed, only the case associated to the anti-periodic boundary conditions was
solved previously in \cite{Nic13b}.}. Completely similar results can be
derived for others compact non-fundamental representations, as for example
the higher spin and cyclic representations of the trigonometric 6-vertex
Yang-Baxter algebra or their higher rank cases as it will be
described in future publications.

In this paper we first implement, for these higher-spin representations of the rational 6-vertex Yang-Baxter algebra, Sklyanin's construction for the SoV
basis for the most general quasi-periodic integrable boundary conditions. This construction leads to
generate the eigenbasis of the twisted $B$-family of commuting operators, or
some simple generalization of it, and to prove that it is diagonalizable and
simple spectrum for these higher spin representations. 

Then, we show that our SoV construction presented in the fundamental
representation in \cite{MaiN18} can be indeed naturally extended to the compact
non-fundamental representations. This is first done by substituting in the SoV basis
construction the  transfer matrix, i.e. the one associated to the
trace over the two-dimensional auxiliary space, by the fundamental transfer
matrix, i.e. the one associated to the trace over the auxiliary space
isomorphic to the one of the local quantum spaces. We give the proof that the SoV basis
construction can in this framework be derived following a method very similar to the one used for fundamental representations. One direct consequence of this
SoV basis construction is then the simplicity and diagonalizability of the transfer matrix.

Then a second SoV basis construction is presented using the full tower of higher fused transfer matrices. This construction appears to be very natural as the action of the transfer matrix in this basis is easily 
computed just using the fusion relations. In particular, it allows to prove that the
transfer matrix spectral problem is indeed separated in this basis. In fact,
we then derive the quantum spectral curve equation and uniquely determine the set
of its solutions that characterize the complete spectrum of the
transfer matrices. We further show that our
second SoV basis indeed coincides with Sklyanin's one once we
chose the generating co-vector in an appropriate way.

Finally, we show that the quantum spectral curve equation together with the
diagonalizability and the simple spectrum character of the  transfer
matrix family allow us to characterize the $Q$-operator family in terms of
the elements of the monodromy matrix, and in particular in terms of the set of fused 
transfer matrices themselves. This result also allows us to rewrite our second SoV
basis as the action of the $Q$-operator family on some new generating
co-vector, i.e. to use the $Q$-operator family as the set of commuting
conserved charges generating our SoV basis. The striking effect in using the basis generated by the $Q$-operator is that the action of the transfer matrix on this SoV basis is directly given as an explicit linear action thanks to the $T$-$Q$ equation. Hence it realizes the key feature of the the Frobenius method described in \cite{MaiN18}, here generalized to the transfer matrix, to have a basis for which the linear action of the transfer matrix on it is just given by the characteristic equation (here the $T$-$Q$ equation) determining its spectrum.

It is worth to comment that on the basis of all our current results for
both fundamental and non-fundamental compact representations of the
Yang-Baxter algebra our SoV construction based on the use of the $Q$%
-operator family as the generating set of commuting conserved charges always
lead to the same natural choice of the SoV basis induced by the fusion of
transfer matrices.

Clearly in order to use directly the $Q$-operator family to generate SoV bases for
others integrable quantum lattice models all the following fundamental
elements have to be accessible: first we need to have an SoV independent
characterization of the $Q$-operator family; second we have to design some
criteria to identify appropriate generating co-vectors (as
starting point of our SoV construction) as well as the exact subset of
commuting conserved charges in the $Q$-operator family (i.e. the spectrum of
the separate variables); third a proof that the set of co-vectors generated
is indeed a basis; fourth that the transfer matrix spectrum is indeed
separated in this basis.

In fact, it is important to stress that in our current construction it is indeed
the structure of the transfer matrix fusion relations and the fact that they
simplify for special choices of the spectral parameters that allows us to
naturally select the subset of commuting conserved charges to be used to
generate the SoV basis. Furthermore, these fusion relations determine the structure constants of the  commutative (associative) algebra of conserved charges generated by the transfer matrices. \\

The present paper is organized in five sections. In section 2, we recall the  
higher spin representations of the rational rank one Yang-Baxter algebra and  the fused transfer matrix general 
properties. In section 3, we present Sklyanin's type SoV basis construction giving an 
explicit representation of its co-vectors. In section 4, we present our SoV basis construction and the consequent complete characterization of the transfer matrix 
spectrum. In subsection 4.1, this is done producing an SoV basis which is the natural 
generalization of those generated in the case of the fundamental representations in 
\cite{MaiN18, MaiN18a,MaiN18b}. In subsection 4.2, we present a second SoV basis on which 
the action of the  transfer matrix is easily computed by using the fusion relations. This 
new basis is shown there to coincide with Sklyanin's one under a proper choice of the 
generating co-vector. Finally, in subsection 5.1 we prove the reformulation of the discrete SoV 
complete spectrum characterization in terms the so-called quantum spectral curve equation. This last 
result allows us to determine the $Q$-operator in subsection 5.2 while we use it to 
reconstruct our second SoV basis in subsection 5.3. Finally, in the Conclusion we discuss, on general ground, the relations between the different SoV bases presented in this paper.

\section{The quasi-periodic $Y(gl_{2})$ higher spin representations}
Let us recall that the first studies of the integrable higher spin quantum Heisenberg chains 
have been developed in
\cite{ZamF80,KulRS81,KulS82,Tak82,SogAA83,Bab83,Sog84,Jim85,BabT85,KirR87,AlcBB89,BarR90,KluBP91}. 
The next two subsections are used to recall the higher spin representations of the rank one 
rational Yang-Baxter algebra and the properties of the 
fused transfer matrices which will be used in the next sections to develop our analysis in the framework of the separation of variables. 

\subsection{Higher spin representations}

The generators of the $sl(2)$ algebra:%
\begin{equation}
\lbrack S^{z},S^{\pm }]=\pm S^{\pm },\text{ \ }[S^{+},S^{-}]=2S^{z},
\end{equation}%
admit the following spin-$s_n$ representation:%
\begin{equation}
S_{n}^{z}=\text{diag}(s_{n},s_{n}-1,\ldots ,-s_{n}),\text{ \ \ }%
S_{n}^{+}=\left( S_{n}^{-}\right) ^{t}=\left( 
\begin{array}{llll}
0 & x_{n}(1) &  &  \\ 
& \ddots & \ddots &  \\ 
&  & \ddots & x_{n}(2s_{n}) \\ 
&  &  & 0%
\end{array}%
\right) ,
\end{equation}%
where $x_{n}(j)\equiv \sqrt{j(2s_{n}+1-j)}$, in a spin-$s_{n}$
representation associated the linear space $V^{(2s_{n})}\simeq $ $\mathbb{C}%
^{2s_{n}+1}$ with $2s_{n}\in \mathbb{Z}^{>0}$. Then, the following Lax
operator:%
\begin{equation}
\mathsf{L}_{0n}^{(1,2s_{n})}(\lambda )\equiv \left( 
\begin{array}{cc}
\lambda +\eta (1/2+S_{n}^{z}) & \eta S_{n}^{-} \\ 
\eta S_{n}^{+} & \lambda +\eta (1/2-S_{n}^{z})%
\end{array}%
\right)_{[0]}\in \End(V_{0}^{(1)}\otimes V_{n}^{(2s_{n})}),
\end{equation}%
associated to each local quantum space $V_{n}^{(2s_{n})}$, satisfies the following
Yang-Baxter algebra:%
\begin{equation}
R_{12}(\lambda -\mu )\mathsf{L}_{1n}^{(1,2s_{n})}(\lambda )\mathsf{L}%
_{2n}^{(1,2s_{n})}(\mu )=\mathsf{L}_{2n}^{(1,2s_{n})}(\mu )\mathsf{L}%
_{1n}^{(1,2s_{n})}(\lambda )R_{12}(\lambda -\mu ),  \label{YBA}
\end{equation}%
associated to the rational 6-vertex $R$-matrix:%
\begin{equation}
\mathsf{L}_{ab}^{(1,1)}(\lambda )\equiv R_{ab}(\lambda )\equiv \left( 
\begin{array}{cccc}
\lambda +\eta & 0 & 0 & 0 \\ 
0 & \lambda & \eta & 0 \\ 
0 & \eta & \lambda & 0 \\ 
0 & 0 & 0 & \lambda +\eta%
\end{array}%
\right) \in \End(V_{a}^{(1)}\otimes V_{b}^{(1)}).
\end{equation}%
The scalar Yang-Baxter equation:%
\begin{equation}
R_{a b}(\lambda )K_{a}K_{b}=K_{b}K_{a}R_{a b}(\lambda ),
\end{equation}%
is satisfied by any $K\in \End(\mathbb{C}^{2})$, which defines the $%
gl_{2}$ invariance of the rational 6-vertex $R$-matrix. We can then introduce
the monodromy matrix associated to a quantum lattice model with $\mathsf{N}$ sites, twist $K$, and carrying at each site $n \in \{1, \ldots,\mathsf{N}\}$ a representation $V_{n}^{(2s_{n})}$:%
\begin{align}
\mathsf{M}_{0}^{(K|1)}(\lambda )& \equiv \left( 
\begin{array}{cc}
\mathsf{A}^{\left( K\right) }(\lambda ) & \mathsf{B}^{\left( K\right)
}(\lambda ) \\ 
\mathsf{C}^{\left( K\right) }(\lambda ) & \mathsf{D}^{\left( K\right)
}(\lambda )%
\end{array}%
\right)_{[0]}  \notag \\
& \equiv K_{0}\mathsf{L}_{0\mathsf{N}}^{(1,2s_{\mathsf{N}})}(\lambda -\xi _{%
\mathsf{N}})\cdots \mathsf{L}_{01}^{(1,2s_{1})}(\lambda -\xi _{1})\in \text{%
End}(V_{0}^{(1)}\otimes \mathcal{H}),
\end{align}%
where we have defined $\mathcal{H}=\otimes _{n=1}^{\mathsf{N}%
}V_{n}^{(2s_{n})} $ and the $\xi _{n}$ are the inhomogeneity parameters. In the following we will assume these parameters to be in generic positions, namely $\xi_i \neq \xi_j (\text{mod}  \eta)$ whenever $i \neq j$. In the following, we will denote by  $%
\mathsf{k}_{1}$ and $\mathsf{k}_{2}$ the  eigenvalues of such $2\times 2$ twist matrix $K$ that we assume to be distinct and non zero (more general cases could be considered however). 
This monodromy matrix satisfies also the same rational 6-vertex Yang-Baxter algebra:%
\begin{equation}
R_{12}(\lambda -\mu )\mathsf{M}_{1}^{(K|1)}(\lambda )\mathsf{M}%
_{2}^{(K|1)}(\mu )=\mathsf{M}_{2}^{(K|1)}(\mu )\mathsf{M}_{1}^{(K|1)}(%
\lambda )R_{12}(\lambda -\mu )\in \End(V_{a}^{(1)}\otimes
V_{b}^{(1)}\otimes \mathcal{H}),
\end{equation}%
which implies that the transfer matrix:%
\begin{equation}
\mathsf{T}^{(K|1)}(\lambda )=\text{tr}_{0}[\mathsf{M}_{0}^{(K|1)}(\lambda )],
\end{equation}%
is a one-parameter family of commuting operators and that the quantum
determinant:%
\begin{equation}
\Delta^{(K)}_{\eta}(\lambda) \equiv \mathrm{qdet} \mathsf{M}_{0}^{(K|1)}(\lambda )\,\,\equiv \,\mathsf{A}^{\left(
K\right) }(\lambda )\mathsf{D}^{\left( K\right) }(\lambda-\eta)-\mathsf{B}%
^{\left( K\right) }(\lambda )\mathsf{C}^{\left( K\right) }(\lambda -\eta)
\label{q-det-f}
\end{equation}%
is a central element of the Yang-Baxter algebra of the following form:%
\begin{equation}
\Delta^{(K)}_{\eta}(\lambda)=\mathrm{qdet} \mathsf{M}_{0}^{(K|1)}(\lambda )\equiv \det K \,  \mathrm{qdet}\mathsf{M}%
_{0}^{(I|1)}(\lambda ),\text{ \ } \mathrm{qdet} \mathsf{M}_{0}^{(I|1)}(\lambda
)=a(\lambda )d(\lambda -\eta ),
\end{equation}%
where $\mathsf{M}_{0}^{(I|1)}(\lambda )$ is the monodromy matrix associated to
the $2\times 2$ identity twist matrix $K=\mathbb{I}_{2\times 2}$, and%
\begin{equation}
a(\lambda )=\prod_{n=1}^{\mathsf{N}}\left( \lambda -\xi _{n}^{-}+s_{n}\eta
\right) ,\text{ \ \ \ \ \ }d(\lambda )=\prod_{n=1}^{\mathsf{N}}\left(
\lambda -\xi _{n}^{-}-s_{n}\eta \right) ,
\end{equation}%
and we have used the notation $\lambda ^{\pm }\equiv \lambda \pm \eta /2$. With these notations the quantum determinant is given by $\Delta^{(K)}_{\eta}(\lambda)=\mathsf{k}_{1} \mathsf{k}_{2} a(\lambda )d(\lambda -\eta )$. Moreover we will use the following shorthand  notations for the shifted inhomogeneities:
\begin{equation}
\xi _{n}^{(k_{n})}\equiv \xi _{n}^{-}+(s_{n}-k_{n})\eta ,
\end{equation}%
with $h_{n}\in \{0,...,2s_{n}\}$ for all the $n\in \{1,...,\mathsf{N}\}$. Hence, we get:
\begin{equation}
a(\lambda )=\prod_{n=1}^{\mathsf{N}}\left( \lambda -\xi _{n}^{(2s_n)}
\right)\, \, \, \text{and}\, \, \, d(\lambda )=\prod_{n=1}^{\mathsf{N}}\left(
\lambda -\xi _{n}^{(0)} \right) .
\end{equation}%

\subsection{Fusion relations for higher spin transfer matrices}

The fusion procedure was first developed in \cite{KulRS81} for the case of
the rational 6-vertex representations of the type analyzed here and later in 
\cite{KirR87} for the trigonometric ones. 

Let us define the following
symmetric and antisymmetric projectors:%
\begin{equation}
P_{1...m}^{\pm }=\frac{1}{m!}\sum_{\pi \in S_{m}}\left( \pm 1\right) ^{\sigma
_{\pi }}P_{\pi },
\end{equation}%
where $P_{\pi }$ is the permutation operator:%
\begin{equation}
P_{\pi }(v_{1}\otimes \cdots \otimes v_{m})=v_{\pi (1)}\otimes \cdots
\otimes v_{\pi (m)},\text{ \ }\forall v_{1}\otimes \cdots \otimes v_{m}\in
\otimes _{a=1}^{m}V_{a},
\end{equation}%
with $P_{1}^{-}=I$. Note that in our current representations, we have that $%
V_{a}\simeq \mathbb{C}^{2}$ and the $P_{1...m}^{+}$ is a rank $m+1$
projector, so that, 
\begin{equation}
V_{1...m}^{+}\equiv P_{1...m}^{+}(\otimes _{a=1}^{m}V_{a})\simeq \mathbb{C}^{m+1}
\end{equation}
is an $\left( m+1\right) $-dimensional vector space. Then, we can define
the following higher transfer matrices:%
\begin{equation}
\mathsf{T}^{(K|a)}(\lambda )\equiv \text{tr}_{V_{1...a}^{+}}\mathsf{M}%
_{1...a}^{(K|a)}(\lambda )\in \End(\mathcal{H})\text{ \ }\forall
\lambda \in \mathbb{C},a\in \mathbb{Z}^{>0},
\end{equation}%
where we have defined the higher spin monodromy matrices by:%
\begin{equation}
\mathsf{M}_{1...a}^{(K|a)}(\lambda )\equiv P_{1...a}^{+}\mathsf{M}%
_{1}^{(K|1)}(\lambda +(a-1)\eta )\cdots \mathsf{M}_{a-1}^{(K|1)}(\lambda +\eta )%
\mathsf{M}_{a}^{(K|1)}(\lambda )P_{1...a}^{+}\in \End%
(V_{1...a}^{+}\otimes \mathcal{H}),
\end{equation}%
for which the following identity holds:%
\begin{equation}
\mathsf{M}_{1...a}^{(K|a)}(\lambda )=K_{1...a}^{(a)}\mathsf{M}%
_{1,...,a}^{(I|a)}(\lambda ),
\end{equation}%
with%
\begin{equation}
K_{1...a}^{(a)}\equiv P_{1...a}^{+}K_{1}\cdots K_{a}P_{1...a}^{+}\in 
\End(V_{1...a}^{+}).
\end{equation}
These transfer matrices define commuting families of operators satisfying for any values of $\lambda$ and $\mu$ and positive integers $l,m\in \mathbb{N}^{*}$: 
\begin{equation}
\lbrack \mathsf{T}^{(K|l)}(\lambda ),\mathsf{T}^{(K|m)}(\mu )]=0\text{ \ \ } ,
\end{equation}%
satisfying the fusion relations: 
\begin{equation}
\mathsf{T}^{(K|l+1)}(\lambda )=\mathsf{T}^{(K)}(\lambda +l\eta )\mathsf{T}%
^{(K|l)}(\lambda )-\Delta^{(K)}_{\eta}(\lambda +l\eta )\mathsf{T}%
^{(K|l-1)}(\lambda ),  \label{Trans-fus}
\end{equation}%
where for simplicity we used the following notations:%
\begin{equation}
\mathsf{T}^{(K)}(\lambda )\equiv \mathsf{T}^{(K|1)}(\lambda )\text{ \ and \ }%
\mathsf{T}^{(K|0)}(\lambda )\equiv 1\text{.}
\end{equation}%
Furthermore, we also set $\mathsf{T}^{(K|l)}(\lambda ) \equiv 0$ for any $l<0$. 
These fusion relations define uniquely any higher spin transfer matrix $\mathsf{T}^{(K|l)}(\lambda )$ in terms of the transfer matrix $\mathsf{T}^{(K)}(\lambda)$. Moreover, it is possible to write an explicit determinant formula that solves the hierarchy of fusion relations as follows. 
\begin{proposition}
Let $D_{l}(\mathsf{T}^{(K)}(\lambda))$ be the following tridiagonal $l \times l$ matrix:
\begin{align}
& D_{l}(\mathsf{T}^{(K)}(\lambda))\left. \equiv \right.  \notag \\
& \left( 
\begin{array}{ccccc}
\mathsf{T}^{(K)}(\lambda + (l-1)\eta) & -\mathsf{k}_{1}a(\lambda + (l-1)\eta) & 0 \ldots
  &  & 0 \\ 
-\mathsf{k}_{2}d(\lambda + (l-2)\eta) & \mathsf{T}^{(K)}(\lambda + (l-2)\eta) & -%
\mathsf{k}_{1}a(\lambda + (l-2)\eta) & 0 \ldots &  \vdots \\ 
0\ldots &  \ddots  & \ddots  &\ddots & 0  \\ 
\vdots &   \ldots 0 & -\mathsf{k}_{2}d(\lambda + \eta) & \mathsf{T}%
^{(K)}(\lambda + \eta) & -\mathsf{k}_{1}a(\lambda + \eta) \\ 
0 \ldots & \ldots  & \ldots 0 & -\mathsf{k}_{2}d(\lambda) & \mathsf{T}%
^{(K)}(\lambda)%
\end{array}%
\right) \ ,
\label{Dl}
\end{align}
then:
\begin{equation}
 \mathsf{T}^{(K|l)}(\lambda ) = \det_{l} D_{l}(\mathsf{T}^{(K)}(\lambda)). 
 \label{detDl}
 \end{equation}
 \end{proposition}
 \begin{proof}
 The proof can be done by an elementary induction. Indeed the formula trivially holds for $l=1$ and for $l=2$ it reduces to the fusion relation  defining $\mathsf{T}^{(K|2)}(\lambda)$:
\begin{eqnarray}
\mathsf{T}^{(K|2)}(\lambda) &=& \mathsf{T}^{(K)}(\lambda + \eta) \mathsf{T}^{(K)}(\lambda) -  \Delta^{(K)}_{\eta}(\lambda + \eta )    \mathsf{T}^{(K|0)}(\lambda)\\ \nonumber
& = & \mathsf{T}^{(K)}(\lambda + \eta) \mathsf{T}^{(K)}(\lambda) -  \mathsf{k}_{1} \mathsf{k}_{2}a(\lambda + \eta) d(\lambda)    \mathsf{T}^{(K|0)}(\lambda)\\ \nonumber
& = &  \det_{2} D_{2}(\mathsf{T}^{(K)}(\lambda)).
\end{eqnarray}
Let us now suppose the formula is true up to some integer $l \ge 2$. Then making the  expansion of  $\det_{l+1} D_{l+1}(\mathsf{T}^{(K)}(\lambda))$ with respect to its first column we get:
\begin{equation}
\det_{l+1} D_{l+1}(\mathsf{T}^{(K)}(\lambda)) = \mathsf{T}^{(K)}(\lambda + l \eta) \det_{l} D_{l}(\mathsf{T}^{(K)}(\lambda)) +  \mathsf{k}_{2} d(\lambda + (l-1) \eta) \det_{l} \Gamma_{l}(\lambda),
\end{equation}
where the $l \times l$-matrix $\Gamma_{l}(\lambda)$ is given by:
\begin{equation}
\Gamma_{l}(\lambda) \equiv 
\left(
\begin{array}{c|c}
  -  \mathsf{k}_{1}a(\lambda + l\eta)& 0\ \  \ldots \  \  \ \  \ \  \  \\ \hline
  -  \mathsf{k}_{2}d(\lambda + (l-2)\eta)& \raisebox{-15pt}{{\large\mbox{{$D_{l-1}(\mathsf{T}^{(K)}(\lambda))$}}}} \\[-4ex]
  0 & \\[-0.5ex]
  \vdots &
\end{array}
\right) .
\end{equation}

Then expanding $\det_{l} \Gamma_{l}(\lambda)$ by its first row and applying the induction hypothesis for $l$ and $l-1$ we get:
\begin{equation}
\det_{l+1} D_{l+1}(\mathsf{T}^{(K)}(\lambda)) = \mathsf{T}^{(K)}(\lambda + l \eta) \mathsf{T}^{(K|l)}(\lambda ) - \mathsf{k}_{1} \mathsf{k}_{2} a(\lambda + l\eta) d(\lambda + (l-1) \eta) \mathsf{T}^{(K|l-1)}(\lambda ),
\end{equation}
hence,
\begin{align}
\det_{l+1} D_{l+1}(\mathsf{T}^{(K)}(\lambda))&=\mathsf{T}^{(K)}(\lambda + l \eta) \mathsf{T}^{(K|l)}(\lambda) -  \Delta^{(K)}_{\eta}(\lambda + l\eta )\mathsf{T}^{(K|l-1)}(\lambda) \nonumber \\
&= \mathsf{T}^{(K|l+1)}(\lambda) ,
\end{align}
which completes the proof. 
\end{proof}

From this determinant representation of the fused transfer matrices it is easy to derive another fusion relation by expanding now the corresponding determinant by its last column instead of its first one. Repeating the above steps we get:
\begin{equation}
\mathsf{T}^{(K|l+1)}(\lambda )=\mathsf{T}^{(K)}(\lambda)\mathsf{T}%
^{(K|l)}(\lambda+ \eta )-\Delta^{(K)}_{\eta}(\lambda+\eta )\mathsf{T}%
^{(K|l-1)}(\lambda + 2\eta ). 
\label{Trans-fus-2}
\end{equation}%
In fact, these two fusion relations \eqref{Trans-fus} and \eqref{Trans-fus-2} are two particular cases of more general fusion relations that can be derived thanks to the determinant of a tridiagonal matrix form of the level $l$ fused transfer matrix $\mathsf{T}^{(K|l)}(\lambda)$ given by \eqref{detDl}. 
\begin{proposition}
For any $l \geq 0$ and any $j$ such that $0\leq j \leq l$, the fused transfer matrix $\mathsf{T}^{(K|l)}(\lambda)$ has the following decomposition:
\begin{align}
\mathsf{T}^{(K|l)}(\lambda) &= \mathsf{T}^{(K|j)}(\lambda+(l-j)\eta)  \mathsf{T}^{(K|l-j)}(\lambda)\nonumber \\
 - &\Delta^{(K)}_{\eta}(\lambda + (l-j)\eta ) \mathsf{T}^{(K|j-1)}(\lambda+(l-j+1)\eta) \mathsf{T}^{(K|l-j-1)}(\lambda)
\label{Trans-fus-gen}
\end{align}
\end{proposition}
\begin{proof}
The relation follows from the general relation \eqref{tridiag-j} for the determinant of a tridiagonal matrix in terms of its sub-determinants that we describe in Appendix A. Applying this relation to \eqref{detDl}, we get:
\begin{align}
\mathsf{T}^{(K|l)}(\lambda) &= \mathsf{T}^{(K|1)}(\lambda+(l-j)\eta) \mathsf{T}^{(K|j-1)}(\lambda+(l-j+1)\eta) \mathsf{T}^{(K|l-j)}(\lambda)\nonumber \\
 - &\Delta^{(K)}_{\eta}(\lambda + (l-j+1)\eta ) \mathsf{T}^{(K|j-2)}(\lambda +(l-j+2)\eta) \mathsf{T}^{(K|l-j)}(\lambda)\nonumber \\
  - &\Delta^{(K)}_{\eta}(\lambda + (l-j)\eta ) \mathsf{T}^{(K|j-1)}(\lambda+(l-j+1)\eta) \mathsf{T}^{(K|l-j-1)}(\lambda)
\end{align} 
Then, using the elementary  fusion relation \eqref{Trans-fus-2} for $\mathsf{T}^{(K|j)}(\lambda+(l-j)\eta)$:
\begin{align}
\mathsf{T}^{(K|1)}(\lambda+(l-j)\eta) \mathsf{T}^{(K|j-1)}(\lambda+(l-j+1)\eta) &= \mathsf{T}^{(K|j)}(\lambda+(l-j)\eta)\nonumber\\
 + &\Delta^{(K)}_{\eta}(\lambda + (l-j+1)\eta) \mathsf{T}^{(K|j-2)}(\lambda+(l-j+2)\eta)
\end{align}
we can get rid of the term containing $\mathsf{T}^{(K|j-2)}(\lambda+(l-j+2)\eta)$, and obtain the desired result. Note that \eqref{Trans-fus-gen}can also be proven by direct induction.
\end{proof}
In the characterization of the transfer matrix spectrum we will use the following property:
\begin{proposition}
\label{Effective-fusion}
The transfer matrix $\mathsf{T}^{(K)}(\lambda
) $ is a degree $\mathsf{N}$ polynomial in $\lambda $ with central
asymptotics: 
\begin{equation}
\lim_{\lambda \rightarrow \infty }\lambda ^{-\mathsf{N}}\mathsf{T}%
^{(K)}(\lambda )=tr_{0}K_{0},  \label{Central-asymp}
\end{equation}%
and it satisfies the following system of $\mathsf{N}$ equations:%
\begin{equation}
\mathsf{T}^{(K|2s_n + 1)}(\xi _{n}^{(2s_{n})}) = \det_{2s_{n}+1}D_{2s_{n}+1}(\mathsf{T}^{(K)}(\xi _{n}^{(2s_{n})}))=0,\text{ \ \ }\forall n\in \{1,...,%
\mathsf{N}\},  \label{Discrete-Charact-}
\end{equation}%
where, $D_{2s_{n}+1}(\mathsf{T}^{(K)}(\lambda))$ is the $\left( 2s_{n}+1\right) \times
\left( 2s_{n}+1\right) $ tridiagonal matrix defined in \eqref{Dl}.
\end{proposition}

\begin{proof}
The computation of the asymptotics is easily derived from the known asymptotics
of the elements of the $R$-matrix $R_{an}^{(1|2s_{n})}(\lambda )\in $End$%
(V_{a}^{(1)}\otimes V_{b}^{(2s_{n})})$. Moreover, it can be shown from \cite{KulRS81} that the polynomials $\mathsf{T}^{(K|2s_n+1)}(\lambda)$  admit the central zeroes $\xi_n^{(2s_n)}$. Then the system of equations satisfied by the transfer matrix is just a direct consequence of the fusion relations for the transfer matrices resolved in \eqref{detDl} and of the emerging central zeroes above discussed.
\end{proof}
For definiteness we will call "fundamental" the transfer matrices $\mathsf{T}^{(K|2s_a)}$, namely whenever the fusion index is such that on the site $a$ the quantum and auxiliary spaces are isomorphic. In particular, in the case where all values of spins $s_a$ are equal to some chosen $s$, such a fundamental transfer matrix is obtained from the product of fundamental $R$-matrices $R^{(2s|2s)}$.

Notice that, starting from the fact that $\xi_n^{(2s_n)}$ is a central zero of $\mathsf{T}^{(K|2s_n+1)}(\lambda)$, an elementary recursion, with the help of the fusion relations \eqref{Trans-fus} and \eqref{Trans-fus-2}, can be used to prove that indeed the fused transfer matrix $\mathsf{T}^{(K|2s_n +l)}(\lambda)$, for $l\geq1$ admits the set of central zeroes $\xi_n^{(2s_n)}, ..., \xi_n^{(2s_n + l - 1)}$. This property also follows directly from representation theory for higher spin $R$-matrices \cite{KulRS81}.  The existence of these central zeroes identities implies some interesting particular fusion relations that we will use in our construction of the SoV bases. 
We summarize them in the following proposition.
\begin{proposition}
\label{higher-fusion-relations}
For any $a = 1, ..., N$ and any $j = 0, 1, ....,2s_a$; we have the following higher fusion relations in the well chosen shifted inhomogeneity points:
\begin{equation}
\label{higherfusionrelation}
\mathsf{T}^{(K|2s_a)} (\xi_{a}^{(2s_a)}) \, \mathsf{T}^{(K|j)}(\xi_a^{(j-1)}) = \mathsf{T}^{(K|2s_a - j)}(\xi_a^{(2s_a)}) \, \prod_{k=0}^{j-1} \Delta_{\eta}^{(K)} (\xi_a^{(k)}) \,  .
\end{equation}
\end{proposition}
\begin{proof}
The relation is obviously true for $j=0$ and reduces to the standard  fusion relation for $j=1$ when evaluated in $\lambda = \xi_{a}^{(2s_a)}$ using the fact that $\mathsf{T}^{(K|2s_a +1)} (\xi_{a}^{(2s_a)})=0$ due to the above central zero property as $\xi_{a}^{(2s_a)} = \xi _{a}^{-}-s_{a}\eta$. Then the relation can be easily proven by induction on $j$ using again the standard fusion relation evaluated in the shifted inhomogeneities. Note that we can also prove this relation by  applying recursively \eqref{Trans-fus-gen} for $l=2s_a + 1$ and varying values of $j$ at $\lambda = \xi_{a}^{(2s_a)}$.
\end{proof}

Let us finally comment that if the original $2\times 2$ twist matrix $K$ is
diagonalizable, simple and invertible then the same is true for all the
fused twist matrices. In particular, given the distinct nonzero eigenvalues $%
\mathsf{k}_{1}$ and $\mathsf{k}_{2}$ of such $2\times 2$ twist matrix $K$
then the fused twist matrix $K^{(a)}$ has the following simple spectrum $%
\mathsf{k}_{h}=\mathsf{k}_{1}^{a+1-h}\mathsf{k}_{2}^{h-1}$ for all $h\in
\{1,...,a+1\}$, for any fixed $a\in \mathbb{N}^{*}$. Note that here and in the following we use a shorter notation to represent the
spaces in the fused twist matrices, i.e. we can write $K_{n}^{(2s_{n})}$ thanks
to the isomorphism $V_{n}^{(2s_{n})}\simeq V_{1,...,2s_{n}}^{+}$. Finally, let us remark
that the following commutation relations hold:%
\begin{equation}
\lbrack \mathsf{L}_{0n}^{(1,2s_{n})}(\lambda ),K_{0}\otimes
K_{n}^{(2s_{n})}]=0\in \End(V_{0}\otimes V_{n}^{(2s_{n})}),\text{ \ \ }%
\forall a\in \{1,...,\mathsf{N}\}  \label{Sym-s-1/2}
\end{equation}%
and so we have also:%
\begin{equation}
\lbrack \mathsf{M}_{0}^{(I)}(\lambda ),K_{0}\otimes \mathsf{K}]=0\in \text{%
End}(V_{0}\otimes \mathcal{H}),\text{ \ \ }\forall a\in \{1,...,\mathsf{N}\}
\end{equation}%
where:%
\begin{equation}
\mathsf{K}\equiv \otimes _{n=1}^{\mathsf{N}}K_{n}^{(2s_{n})}\in \End(%
\mathcal{H}).
\end{equation}%

\section{Sklyanin's type construction of the SoV basis}

Let us first show how Sklyanin's approach to separation of variables (SoV) \cite{Skl85,Skl90,Skl92,Skl92a,Skl95,Skl96} works for the $\mathsf{T}^{(K)}$%
-spectral problem in representations for which the
commutative family of operators $\mathsf{B}^{(K)}(\lambda )$ (or $\mathsf{C}%
^{(K)}(\lambda )$) is diagonalizable and with simple spectrum. As already
explained in our previous paper for the case of fundamental representation
such a statement can be extended by using the $gl_{2}$ invariance. In order
to do so we have to use the following remark that given a%
\begin{equation}
K=\left( 
\begin{array}{cc}
a & b \\ 
c & d%
\end{array}%
\right) \neq \alpha \mathbb{I}_{2\times 2}\in \,\End(\mathbb{C}^{2}),
\label{SoV-0-cond}
\end{equation}%
either it satisfies the condition $b\neq 0$ (or $c\neq 0$) directly or there
exists a $W^{(K)}\in \,\End(\mathbb{C}^{2})$ such that:%
\begin{equation}
\bar{K}=\left( W^{(K)}\right) ^{-1}KW^{(K)}=\left( 
\begin{array}{cc}
\bar{a} & \bar{b}\neq 0 \\ 
\bar{c}\neq 0 & \bar{d}%
\end{array}%
\right) ,
\end{equation}%
so that we can state the following:

\begin{theorem}
If the inhomogeneities $\{\xi _{1},...,\xi _{\mathsf{N}}\}\in \mathbb{C}$ $^{%
\mathsf{N}}$ satisfy the conditions: 
\begin{equation}
\xi _{a}\neq \xi _{b}\mathsf{\,\,\,mod\,}\eta \,\,\,\,\,\forall a\neq b\in
\{1,...,\mathsf{N}\},  \label{E-SOV}
\end{equation}%
and the twist matrix%
\begin{equation}
K\neq \alpha \mathbb{I}_{2\times 2}\in \,\End(\mathbb{C}^{2}),
\end{equation}%
for any $\alpha \in \mathbb{C}$, i.e. $K$ is not proportional to the
identity, then the $\mathsf{T}^{(K)}$-spectral problem admits Sklyanin's
like separate variable representations. More in detail:

a) If $b\neq 0$ (or $c\neq 0$) then $\mathsf{B}^{(K)}(\lambda )$ (or $%
\mathsf{C}^{(K)}(\lambda )$) is diagonalizable and with simple spectrum and
the quantum separate variables are generated by the $\mathsf{B}^{(K)}$%
-operator (or $\mathsf{C}^{(K)}$-operator) zeroes.

b) If $b=0$, defined:
\begin{equation}
\mathcal{W}_{K}\equiv \otimes _{n=1}^{\mathsf{N}}W^{(K)}_{n}\in \End(\mathcal{H}),
\end{equation}
then
\begin{align}
\mathsf{\tilde{B}}^{(K)}(\lambda )& =tr_{V_{a}}[W_{a}^{(K)}\left( 
\begin{array}{cc}
0 & 0 \\ 
1 & 0%
\end{array}%
\right) _{a}\left( W^{(K)}\right) _{a}^{-1}\mathsf{M}_{a}^{(K)}(\lambda )] 
\notag \\
& =\mathcal{W}_{K}\mathsf{B}^{(\bar{K})}(\lambda )\mathcal{W}_{K}^{-1},
\end{align}%
is diagonalizable and with simple spectrum and the quantum separate
variables are generated by the $\mathsf{\tilde{B}}^{(K)}$-operator zeroes.

c) If $c=0$ then 
\begin{align}
\mathsf{\tilde{C}}^{(K)}(\lambda )& =tr_{V_{a}}[W_{a}^{(K)}\left( 
\begin{array}{cc}
0 & 1 \\ 
0 & 0%
\end{array}%
\right) _{a}\left( W^{(K)}\right) _{a}^{1}\mathsf{M}_{a}^{(K)}(\lambda )] 
\notag \\
& =\mathcal{W}_{K}\mathsf{C}^{(\bar{K})}(\lambda )\mathcal{W}_{K}^{-1},
\end{align}%
is diagonalizable and with simple spectrum and the quantum separate
variables are generated by the $\mathsf{\tilde{C}}^{(K)}$-operator zeroes.
\end{theorem}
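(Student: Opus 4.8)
The plan is to regard case (a) as the essential statement and to deduce (b) and (c) from it by conjugation. The mechanism is the covariance identity
$\mathsf{M}_0^{(K|1)}(\lambda)=\mathcal{W}_K\,W_0^{(K)}\,\mathsf{M}_0^{(\bar K|1)}(\lambda)\,(W_0^{(K)})^{-1}\,\mathcal{W}_K^{-1}$, with $\bar K=(W^{(K)})^{-1}KW^{(K)}$ and $\mathcal{W}_K=\otimes_{n=1}^{\mathsf{N}}W^{(K)}$, which follows from the $gl_2$ invariance $[\mathsf{L}_{0n}^{(1,2s_n)}(\lambda),W_0\otimes W_n^{(2s_n)}]=0$ of \rf{Sym-s-1/2} extended to an arbitrary invertible $W^{(K)}$. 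For (b) one has $\bar b\neq 0$, so $\mathsf{B}^{(\bar K)}(\lambda)$ is covered by (a); since the identity gives $\mathsf{\tilde B}^{(K)}(\lambda)=\mathcal{W}_K\mathsf{B}^{(\bar K)}(\lambda)\mathcal{W}_K^{-1}$, diagonalizability and simplicity of the spectrum pass through the similarity transformation, and the separate variables are the $\mathcal{W}_K$-conjugates of those of $\mathsf{B}^{(\bar K)}$. Case (c) is identical after exchanging the roles of $\mathsf{B}$ and $\mathsf{C}$, which is the $\mathsf{A}\leftrightarrow\mathsf{D}$, $\mathsf{B}\leftrightarrow\mathsf{C}$ symmetry of the Yang-Baxter algebra.

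For (a) I would first record commutativity. Because $R_{12}K_1K_2=K_2K_1R_{12}$ for every $K$, the twisted monodromy matrix satisfies the same rational 6-vertex Yang-Baxter algebra \rf{YBA} as the untwisted one; the component of the $RTT$ relation pairing two upper-right entries then yields $[\mathsf{B}^{(K)}(\lambda),\mathsf{B}^{(K)}(\mu)]=0$, so $\mathsf{B}^{(K)}$ is a commutative one-parameter family. I would then fix its polynomial structure: from $\mathsf{B}^{(K)}(\lambda)=a\,\mathsf{B}^{(I)}(\lambda)+b\,\mathsf{D}^{(I)}(\lambda)$, together with $\deg_\lambda\mathsf{B}^{(I)}=\mathsf{N}-1$ and $\deg_\lambda\mathsf{D}^{(I)}=\mathsf{N}$ with scalar leading coefficient, the hypothesis $b\neq 0$ makes $\mathsf{B}^{(K)}(\lambda)$ a polynomial of degree exactly $\mathsf{N}$ with central leading coefficient $b$. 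Hence it factorizes as $\mathsf{B}^{(K)}(\lambda)=b\prod_{n=1}^{\mathsf{N}}(\lambda-\mathsf{B}_n)$ into $\mathsf{N}$ mutually commuting operator zeros, the candidate separate variables.

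The heart of the proof is then to show that the joint spectrum of the $\mathsf{B}_n$ consists of exactly $\dim\mathcal{H}=\prod_{n=1}^{\mathsf{N}}(2s_n+1)$ pairwise distinct tuples, which forces simplicity of the spectrum and, a fortiori, diagonalizability. I would identify the admissible eigenvalues by evaluating $\mathsf{B}^{(K)}$ at the shifted inhomogeneities and using the fusion data and the central divisor \rf{Central-zeros}, showing that the zero attached to site $n$ runs over the length-$(2s_n+1)$ spin string $\{\xi_n^-+(s_n-k)\eta:0\le k\le 2s_n\}$. The cardinalities then match $\dim\mathcal{H}$, and the condition \rf{E-SOV}, $\xi_a\neq\xi_b\bmod\eta$, guarantees that strings attached to distinct sites are disjoint, so that the tuple of eigenvalues is recovered from the unordered multiset of roots of $\mathsf{B}^{(K)}(\lambda)$; distinct tuples therefore give distinct functions $\mathsf{B}^{(K)}(\lambda)$.

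The step I expect to be the main obstacle is precisely this simplicity statement in the genuinely higher-spin situation. In contrast with the fundamental case, each separate variable now carries $2s_n+1$ values, so one must simultaneously rule out inter-site coincidences (controlled by \rf{E-SOV}) and intra-site multiplicities arising from the $(2s_n+1)$-dimensional local spaces, and one must do so without a cyclic reference vector, since the non-diagonal twist breaks the $U(1)$ symmetry available for diagonal $K$. I would settle it either by exhibiting an explicit triangular action of the zeros $\mathsf{B}_n$ on the natural tensor-product basis of $\mathcal{H}$ with pairwise distinct diagonal entries, or by an induction on $\mathsf{N}$ peeling off one site at a time through the fusion relations \rf{Trans-fus}; the careful bookkeeping of the spin strings, and the verification that no two basis vectors share a common multiset of roots, is where the genuine work lies.
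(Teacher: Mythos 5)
There is a genuine gap. Your reduction of cases (b) and (c) to case (a) by $gl_2$-covariance is exactly right and matches the paper, as is your identification of $\mathsf{B}^{(K)}(\lambda)=a\,\mathsf{B}(\lambda)+b\,\mathsf{D}(\lambda)$ as a commuting family of degree-$\mathsf{N}$ polynomials with leading coefficient $b$ and of the candidate spectrum as the spin strings $\{\xi_n^-+(s_n-k)\eta\}$. But the heart of the theorem --- that $\mathsf{B}^{(K)}$ is diagonalizable with simple spectrum --- is precisely the step you do not carry out: you name two possible strategies and explicitly defer "the genuine work". Worse, both strategies as sketched have problems. The factorization $\mathsf{B}^{(K)}(\lambda)=b\prod_n(\lambda-\mathsf{B}_n)$ into commuting operator zeros is not available before diagonalizability is established (defining the $\mathsf{B}_n$ consistently is essentially equivalent to what you are trying to prove), so an argument based on "the triangular action of the zeros $\mathsf{B}_n$" is circular. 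And the cardinality match --- at most $\prod_n(2s_n+1)$ admissible root multisets versus $\dim\mathcal{H}=\prod_n(2s_n+1)$ --- does not by itself exclude unrealized eigenvalues or Jordan blocks; fusion-type constraints only bound which eigenvalues can occur, not that each occurs exactly once on a diagonalizable operator.

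The paper closes this gap constructively. The reference covector $\langle 0|$ satisfies $\langle 0|\mathsf{B}(\lambda)=0$ and $\langle 0|\mathsf{D}(\lambda)=d(\lambda)\langle 0|$, hence is a $\mathsf{B}^{(K)}$-eigencovector with eigenvalue $b\,d(\lambda)=b\prod_n(\lambda-\xi_n^{(0)})$; acting with the shift operators $\mathsf{A}^{(K)}(\xi_n^{(k_n)})$ and using the Yang--Baxter commutation relations produces the full set of $\prod_n(2s_n+1)$ covectors $\langle\mathbf{h}|_{Sk}$ of \rf{D-left-eigenstates}, each an eigencovector with eigenvalue $b\prod_n(\lambda-\xi_n^{(h_n)})$; these eigenvalues are pairwise distinct by \rf{E-SOV}, and the covectors are shown to be nonzero by pairing them with the corresponding $\mathsf{B}^{(K)}$-eigenvectors (as in the antiperiodic case of \cite{Nic13b}). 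Distinctness of eigenvalues plus the dimension count then gives a basis, hence diagonalizability and simplicity in one stroke. If you want to keep your route, you must replace the deferred step by an actual construction of this kind (or an equivalent argument ruling out multiplicities and non-semisimplicity), since nothing in your counting forces either.
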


In the next subsection we construct explicitly the $\mathsf{B}^{(K)}$%
-eigenbasis and $\mathsf{\tilde{B}}^{(K)}$-eigenbasis, the construction for $%
\mathsf{C}^{(K)}$-eigenbasis and $\mathsf{\tilde{C}}^{(K)}$-eigenbasis can
be similarly derived, in this way giving a constructive proof of the above theorem.

\subsection{Construction of the SoV representation in $\mathsf{B}^{(K)}$%
-eigenbasis}

Let%
\begin{equation}
\langle 0|\equiv \otimes _{n=1}^{\mathsf{N}}\langle 0,n|,
\end{equation}%
where we have defined for any value of $n$ the following $2s_n+1$-co-vectors (\textit{local left references states}):%
\begin{equation}
\langle 0,n|=\left( 1,0,...,0\right) _{2s_{n}+1},
\end{equation}%
and let us denote by $\mathsf{k}_{1}$ and $\mathsf{k}_{2}$ the eigenvalues
of $K\in \,\End(\mathbb{C}^{2})$, then:

\begin{theorem}
a) If the conditions in $\left( \ref{E-SOV}\right) $ are verified and $K\in \End(%
\mathbb{C}^{2})$ is any $2\times 2$ matrix not proportional to the identity,
invertible\footnote{%
We have introduced this requirement only to make easier the comparison with
Theorem \ref{Second-basis}. In fact, the statement of the theorem holds also for non-invertible twist
matrices, it is enough to remove the eigenvalue $\mathsf{k}_{1}$ from the
definition of the co-vectors (\ref{D-left-eigenstates}) and (\ref%
{Gen-eigenco-vector-B}).} and satisfying the condition $%
b\neq 0$, then the set of co-vectors $\langle $\textbf{h}$|_{Sk}\equiv \langle
h_{1},...,h_{\mathsf{N}}|_{Sk}$, defined by: 
\begin{equation}
\langle \text{\textbf{h}}|_{Sk}\equiv \langle
0|\prod_{n=1}^{\mathsf{N}}\prod_{k_{n}=0}^{h_{n}-1}\frac{\mathsf{A}^{(K)}(\xi
_{n}^{(k_{n})})}{\mathsf{k}_{1}a(\xi _{n}^{(k_{n})})},
\label{D-left-eigenstates}
\end{equation}%
%\frac{1}{\text{\textsc{n}}}
%\begin{equation}
%\text{\textsc{n}}=\prod_{1\leq b<a\leq \mathsf{N}}(\xi _{a}^{(0)}-\xi
%_{b}^{(0)})^{1/2},  \label{Norm-def}
%\end{equation}%
where $h_{n}\in \{0,...,2s_{n}\}$ for all the $n\in \{1,...,\mathsf{N}\}$ and:%
\begin{equation}
\xi _{n}^{(k_{n})}\equiv \xi _{n}^{-}+(s_{n}-k_{n})\eta ,
\end{equation}%
defines a co-vector $\mathsf{B}^{(K)}$-eigenbasis of $\mathcal{H}$:%
\begin{equation}
\langle \text{\textbf{h}}|_{Sk}\mathsf{B}^{(K)}(\lambda )=d_{\text{\textbf{h}%
}}^{(K)}(\lambda )\langle \text{\textbf{h}}|_{Sk},  \label{D-L-EigenV}
\end{equation}%
with the distinct eigenvalues:%
\begin{equation}
d_{\text{\textbf{h}}}^{(K)}(\lambda )\equiv b\prod_{n=1}^{\mathsf{N}%
}(\lambda -\xi _{n}^{(h_{n})})\text{ \ \ \ and \ \ \textbf{h}}\equiv
(h_{1},...,h_{\mathsf{N}}).  \label{EigenValue-D}
\end{equation}%
Moreover it holds:%
\begin{eqnarray}
\langle \text{\textbf{h}}|_{Sk}\mathsf{A}^{(K)}(\lambda ) &=&\sum_{a=1}^{%
\mathsf{N}}\prod_{b\neq a}\frac{\lambda -\xi _{b}^{(h_{b})}}{\xi
_{a}^{(h_{a})}-\xi _{b}^{(h_{b})}}\mathsf{k}_{1}a(\xi _{a}^{(h_{a})})\langle 
\text{\textbf{h}}|_{Sk}\text{T}_{a}^{ +},  \label{C-SOV_D-left} \\
&&  \notag \\
\langle \text{\textbf{h}}|_{Sk}\mathsf{D}^{(K)}(\lambda ) &=&\sum_{a=1}^{%
\mathsf{N}}\prod_{b\neq a}\frac{\lambda -\xi _{b}^{(h_{b})}}{\xi
_{a}^{(h_{a})}-\xi _{b}^{(h_{b})}}\mathsf{k}_{2}d(\xi _{a}^{(h_{a})})\langle 
\text{\textbf{h}}|_{Sk}\text{T}_{a}^{-},  \label{B-SOV_D-left}
\end{eqnarray}%
where\footnote{By convention the co-vectors $\langle h_{1}, \ldots,h_{\mathsf{N}}|_{Sk}$ having an $h_i$ outside the set $\{0, \ldots, 2s_i \}$ are identically zero, which is of course compatible with the above actions of the operators $\mathsf{A}^{(K)}(\lambda )$ and $\mathsf{D}^{(K)}(\lambda )$, thanks  to  $a(\xi _{a}^{({2s_a})}) = 0$ and $d(\xi _{a}^{(0)})=0$.}:%
\begin{equation}
\langle h_{1},...,h_{a},...,h_{\mathsf{N}}|_{Sk}\text{T}_{a}^{\pm }\left.
=\right. \langle h_{1},...,h_{a}\pm 1,...,h_{\mathsf{N}}|_{Sk},
\end{equation}%
while, the action of $\mathsf{C}^{(K)}(\lambda )$ is uniquely defined by the quantum
determinant relation.\smallskip

b) If $\left( \ref{E-SOV}\right) $ is verified and $K\in \,\End(%
\mathbb{C}^{2})$ is any $2\times 2$ matrix not proportional to the identity,
invertible and such that $b=0$, then the co-vectors $\underline{\langle \text{%
\textbf{h}}|}_{Sk}\equiv \underline{\langle h_{1},...,h_{\mathsf{N}}|}_{Sk}$%
, defined by: 
\begin{eqnarray}
\underline{\langle \text{\textbf{h}}|}_{Sk} &\equiv &\langle 0|\prod_{n=1}^{\mathsf{N}}\prod_{k_{n}=0}^{h_{n}-1}\frac{\mathsf{A}^{(\bar{%
K})}(\xi _{n}^{(k_{n})})}{\mathsf{k}_{1}a(\xi _{n}^{(k_{n})})}\mathcal{W}%
_{K}^{-1}  \label{Gen-eigenco-vector-B} \\
&=&\langle 0|\mathcal{W}_{K}^{-1}\prod_{n=1}^{\mathsf{N}}%
\prod_{k_{n}=0}^{h_{n}-1}\frac{\mathsf{\tilde{A}}^{(K)}(\xi _{n}^{(k_{n})})}{%
\mathsf{k}_{1}a(\xi _{n}^{(k_{n})})},
\end{eqnarray}%
define a co-vector $\mathsf{\tilde{B}}^{(K)}$-eigenbasis of $\mathcal{H}$:%
\begin{equation}
\underline{\langle \text{\textbf{h}}|}_{Sk}\mathsf{\tilde{B}}^{(K)}(\lambda
)=d_{\text{\textbf{h}}}^{(\bar{K})}(\lambda )\underline{\langle \text{%
\textbf{h}}|}_{Sk},
\end{equation}%
where:%
\begin{equation}
d_{\text{\textbf{h}}}^{(K)}(\lambda )\equiv \bar{b}\prod_{n=1}^{\mathsf{N}%
}(\lambda -\xi _{n}^{(h_{n})})\text{ \ \ \ and \ \ \textbf{h}}\equiv
(h_{1},...,h_{\mathsf{N}}).
\end{equation}%
Moreover it holds:%
\begin{eqnarray}
\underline{\langle \text{\textbf{h}}|}_{Sk}\mathsf{\tilde{A}}^{(K)}(\lambda
) &=&\sum_{a=1}^{\mathsf{N}}\prod_{b\neq a}\frac{\lambda -\xi _{b}^{(h_{b})}%
}{\xi _{a}^{(h_{a})}-\xi _{b}^{(h_{b})}}\mathsf{k}_{1}a(\xi _{a}^{(h_{a})})%
\underline{\langle \text{\textbf{h}}|}_{Sk}\text{T}_{a}^{+}, \\
&&  \notag \\
\underline{\langle \text{\textbf{h}}|}_{Sk}\mathsf{\tilde{D}}^{(K)}(\lambda
) &=&\sum_{a=1}^{\mathsf{N}}\prod_{b\neq a}\frac{\lambda -\xi _{b}^{(h_{b})}%
}{\xi _{a}^{(h_{a})}-\xi _{b}^{(h_{b})}}\mathsf{k}_{2}d(\xi _{a}^{(h_{a})})%
\underline{\langle \text{\textbf{h}}|}_{Sk}\text{T}_{a}^{-}.
\end{eqnarray}
\end{theorem}
\begin{proof}
The proof that the co-vectors $\left( \ref{D-L-EigenV}\right) $ are
eigenco-vectors of $\mathsf{B}^{(K)}(\lambda )$ with the above defined
eigenvalues is standard, see e.g. \cite{Nic13b}, it uses just the Yang-Baxter
commutation relations and the fact that the left reference co-vector is a $%
\mathsf{B}^{(K)}$-eigenco-vector. Indeed, it holds%
\begin{eqnarray}
\mathsf{A}^{(K)}(\lambda ) &=&a\mathsf{A}(\lambda )+b\mathsf{C}(\lambda ),%
\text{ \ }\mathsf{B}^{(K)}(\lambda )=a\mathsf{B}(\lambda )+b\mathsf{D}%
(\lambda ), \\
\mathsf{C}^{(K)}(\lambda ) &=&c\mathsf{A}(\lambda )+d\mathsf{C}(\lambda ),%
\text{ \ }\mathsf{D}^{(K)}(\lambda )=c\mathsf{B}(\lambda )+d\mathsf{D}%
(\lambda ),
\end{eqnarray}%
and%
\begin{equation}
\langle 0|\mathsf{A}(\lambda )=a(\lambda )\langle 0|,\text{ \ \ \ }\langle 0|%
\mathsf{D}(\lambda )=d(\lambda )\langle 0|,\text{ \ \ \ }\langle 0|\mathsf{B}%
(\lambda )=\text{\b{0}},\text{ \ \ \ }\langle 0|\mathsf{C}(\lambda )\neq 
\text{\b{0}}.  \label{L_ref-E}
\end{equation}%
Then the proof that the operators $\mathsf{A}^{(K)}(\lambda )$ and $\mathsf{D%
}^{(K)}(\lambda )$ have the given representation in the $\mathsf{B}^{(K)}$%
-eigenco-vectors is once again a direct consequence of the Yang-Baxter
commutation relations. Finally, note that the above construction generates 
\begin{equation}
d_{\{s_{n}\}}=\prod_{n=1}^{\mathsf{N}}(2s_{n}+1),
\end{equation}%
i.e. the dimension of the representation, $\mathsf{B}^{(K)}$-eigenco-vectors
which are independent and so form a basis as soon as they are all nonzero as
they are associated to different eigenvalues of $\mathsf{B}^{(K)}(\lambda )$%
. This last statement can be for example shown by constructing the $\mathsf{B%
}^{(K)}$-eigenvectors and proving that the action of a $\mathsf{B}^{(K)}$%
-eigenco-vector on the $\mathsf{B}^{(K)}$-eigenvector associated to the same
eigenvalue is nonzero. We omit this steps as they can be done following
exactly the same lines described in the case of the anti-periodic boundary
conditions \cite{Nic13b}.

If the condition $b=0$ is satisfied, then the above results allow similarly
to show that the one parameter operator family $\mathsf{B}^{(\bar{K}%
)}(\lambda )$ is diagonalizable with simple spectrum and they allow to derive its left
eigenbasis. Then the same statements hold for the one parameter operator
family $\mathsf{\tilde{B}}^{(K)}(\lambda )$ defining the SoV basis for $b=0$%
, it being similar to $\mathsf{B}^{(\bar{K})}(\lambda )$:%
\begin{equation}
\underline{\langle \text{\textbf{h}}|}_{Sk}\equiv %
\langle 0|\prod_{n=1}^{\mathsf{N}}\prod_{k_{n}=0}^{h_{n}-1}\frac{\mathsf{A}^{(\bar{K}%
)}(\xi _{n}^{(k_{n})})}{\mathsf{k}_{1}a(\xi _{n}^{(k_{n})})}\mathcal{W}_{K}^{-1},
\end{equation}%
where:%
\begin{equation}
\underline{\langle \text{\textbf{h}}|}_{Sk}\mathsf{\tilde{B}}^{(K)}(\lambda
)=d_{\text{\textbf{h}}}^{(K)}(\lambda )\underline{\langle \text{\textbf{h}}|}%
_{Sk}.
\end{equation}
\end{proof}

\section{New SoV bases and complete spectrum characterization}

In this section we construct two different SoV bases from two natural sets of conserved charges of the considered models using the method developed in \cite{MaiN18}. Moreover, we show how these SoV bases indeed separate the quantum spectral problem for the transfer
matrix. We have already presented such a
procedure \cite{MaiN18,MaiN18a} in the case of models associated to
fundamental representations of the rational Yang-Baxter algebra. Here we
explain how this procedure can be developed for non-fundamental
representations, using as an example higher spin representations of the
rational $gl_{2}$ Yang-Baxter algebra.

In subsection \ref{Bas1}, we introduce a first set of SoV co-vectors generated from the transfer matrices obtained from the fundamental Lax operators, i.e., the Lax operators for which the auxiliary space is isomorphic to its local quantum space at some site $n$. They are obtained by fusion from the original Lax operator having an auxiliary space in the spin-1/2 representation. In this case the proof that these conserved charges generate a basis of the Hilbert space is given following the same main steps used for the fundamental representations in  \cite{MaiN18}. This SoV basis is quite natural in this respect and it allows to prove also the
simplicity of the transfer matrix spectrum and derive its complete characterization. In subsection \ref{SoV-Good}, we then introduce another SoV basis constructed from the full tower of fused transfer matrices, which we argue 
to be the most natural with respect to the action of the transfer matrix that becomes explicitly linear in that basis thanks to the fusion rules satisfied by the quantum
spectral invariants. There, we prove also that under some special choice of the generating co-vector this SoV basis  
coincides with Sklyanin's SoV basis presented in the previous section.

\subsection{A first SoV basis construction and the associated spectrum characterization}\label{Bas1}

The following proposition holds:

\begin{proposition}
\label{First-SoV-Basis}The following set of co-vectors, obtained from the action of the fundamental transfer matrices $\mathsf{T}^{(K|2s_{n})}(\xi _{n}^{(2s_{n}-1)})$ 
\begin{equation}
_f{\langle h_{1},...,h_{N}|}\equiv \langle \Omega|\prod_{n=1}^{\mathsf{N}%
}(\mathsf{T}^{(K|2s_{n})}(\xi _{n}^{(2s_{n}-1)}))^{h_{n}}\text{\ }\forall
h_{a}\in \{0,...,2s_{n}\},a\in \{1,...,\mathsf{N}\}  \label{SoV-basis-0}
\end{equation}%
defines a co-vector basis of $\mathcal{H}$ for almost any choice of the
co-vector $\langle \Omega|$ and of the inhomogeneity parameters $\{\xi
_{1},...,\xi _{N}\}$ under the condition $\left( \ref{E-SOV}\right) $
and the requirement $K\in \End(\mathbb{C}^{2})$ diagonalizable, simple and
invertible. In particular, we can chose the following tensor product form
for the co-vector%
\begin{equation}
\langle \Omega|\equiv \bigotimes_{n=1}^{\mathsf{N}}\langle \omega_n|,  \label{Tensor-S}
\end{equation}%
where $\langle \omega_n|$ is any local co-vector of $V_{n}^{(2s_{n})}$ such that the set of $2s_n +1$ co-vectors%
\begin{equation}
\langle \omega_n|\left[K_{n}^{(2s_{n})}\right]^{h_n} \text{for}\, \, h_n \in \{0,...,2s_{n}\},
\end{equation}%
form a co-vector basis of $V_{n}^{(2s_{n})}$ for any $n\in \{1,...,\mathsf{N}\}$. Moreover, under
the same conditions, the transfer matrix $\mathsf{T}^{(K)}(\lambda )$ is diagonalizable with
simple spectrum.
\end{proposition}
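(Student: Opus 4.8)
The plan is to reduce the whole statement to the behaviour of the generating operators $O_{n}:=\mathsf{T}^{(K|2s_{n})}(\xi_{n}^{(2s_{n}-1)})$ at the distinguished points, exactly as in the fundamental case of \cite{MaiN18}. The key input is a higher-spin regularity property: since $\xi_{n}^{(2s_{n}-1)}-\xi_{n}=(1/2-s_{n})\eta$, this is precisely the point at which the fused Lax operator carrying at site $n$ an auxiliary space isomorphic to $V_{n}^{(2s_{n})}$ degenerates to a multiple of the permutation $P_{0n}$ on $V_{0}^{(2s_{n})}\otimes V_{n}^{(2s_{n})}$; for $s_{n}=1/2$ this is the familiar $R(0)\propto P$. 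First I would use this to compute the action of $O_{n}$: tracing the fused monodromy over the auxiliary space and transporting the permutation through the remaining Lax factors collapses the auxiliary trace onto the $n$-th quantum space, leaving an operator whose $n$-th tensor factor is governed by the fused twist $K_{n}^{(2s_{n})}$, the inhomogeneity conditions $\rf{E-SOV}$ guaranteeing that the accompanying quantum-determinant factors stay nonzero.

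Next I would establish the basis property following the same steps as in \cite{MaiN18}. Acting with $\prod_{n}O_{n}^{h_{n}}$ on the tensor-product covector $\rf{Tensor-S}$ and using the reduction above, one obtains a triangular expansion whose leading term, in a site-by-site ordering of the $h_{n}$, factorizes as $\bigotimes_{n}\langle S,n|(K_{n}^{(2s_{n})})^{h_{n}}$ with a nonzero coefficient, all other contributions being of strictly lower order. Because $K_{n}^{(2s_{n})}$ has the simple spectrum $\mathsf{k}_{1}^{2s_{n}+1-h}\mathsf{k}_{2}^{h-1}$ recalled above and the local covectors $\langle S,n|(K_{n}^{(2s_{n})})^{h}$ are assumed to form a local basis, this leading term is itself a basis; hence the determinant of the transition matrix from it to the SoV covectors $\rf{SoV-basis-0}$ is a polynomial in the entries of $\langle S|$ and in the $\xi_{n}$ that does not vanish identically, and is therefore nonzero for almost any choice of these data. \emph{The main obstacle lies here:} for $s_{n}>1/2$ the permutation collapse no longer produces a strictly local operator but one threading through all the sites, so the genuine work is to control these non-local corrections on the generic covector $\langle S|$ and to verify that the leading cyclic action indeed exhausts the $(2s_{n}+1)$-dimensional local space with distinct weights.

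Finally I would deduce diagonalizability and simplicity of $\mathsf{T}^{(K)}(\lambda)$. The $O_{n}$ pairwise commute, and by the determinant formula $\rf{detDl}$ each $O_{n}=\det_{2s_{n}}D_{2s_{n}}(\mathsf{T}^{(K)})\big|_{\lambda=\xi_{n}^{(2s_{n}-1)}}$ is a fixed polynomial in the values $\mathsf{T}^{(K)}(\xi_{n}^{(2s_{n}-1)}+j\eta)$, so the $O_{n}$ lie in the commutative algebra $\mathcal{A}$ generated by the transfer matrix. The basis property just proved means that $\langle S|$ is a cyclic covector for $\mathcal{A}$; since $\mathcal{A}$ is commutative the map $a\mapsto\langle S|a$ is then a bijection, forcing $\dim\mathcal{A}=\dim\mathcal{H}$, i.e. $\mathcal{A}$ is a maximal commutative subalgebra of $\End(\mathcal{H})$. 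Once $\mathsf{T}^{(K)}(\lambda)$ is diagonalizable, equivalently $\mathcal{A}$ is semisimple, such a maximal commutative algebra with a cyclic vector splits into $\dim\mathcal{H}$ one-dimensional joint eigenspaces, so its spectrum, and in particular that of $\mathsf{T}^{(K)}(\lambda)$, is automatically simple; concretely, $\rf{detDl}$ shows that the eigenvalue function $\lambda\mapsto\tau(\lambda)$ of a common eigenvector already fixes the joint $\{O_{n}\}$-eigenvalue labelling it, so two eigenvectors with the same $\tau$ must coincide. The remaining point, semisimplicity, I would secure by reducing to a diagonal twist: using the $gl_{2}$-invariance of the $R$-matrix one checks $\mathsf{T}^{(K)}(\lambda)=\mathcal{W}\,\mathsf{T}^{(D)}(\lambda)\,\mathcal{W}^{-1}$ with $D=\mathrm{diag}(\mathsf{k}_{1},\mathsf{k}_{2})$ and $\mathcal{W}=\otimes_{n}W_{n}^{(2s_{n})}$, and then excluding Jordan blocks from the separated action of $\mathsf{T}^{(D)}$ in the basis constructed above.
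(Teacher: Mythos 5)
Your first two paragraphs reproduce the paper's argument: the proof in the text is exactly a reduction to Propositions 2.4 and 2.5 of \cite{MaiN18}, whose only new inputs are (i) the degeneration $R_{ab}^{(2s|2s)}(-(s-1/2)\eta)=P_{ab}$, which rewrites $\mathsf{T}^{(K|2s_{n})}(\xi_{n}^{(2s_{n}-1)})$ as a product of fused $R$-matrices threading through the chain with $K_{n}^{(2s_{n})}$ inserted at site $n$, (ii) the fact that the leading asymptotics of every $R_{nm}^{(2s_{n}|2s_{m})}(\lambda)$ is proportional to the identity, so that in the regime of widely separated inhomogeneities the determinant of the family \eqref{SoV-basis-0} factorizes into local determinants built from $\langle S,n|(K_{n}^{(2s_{n})})^{h}$, and (iii) the simple spectrum $\mathsf{k}_{1}^{2s_{n}+1-h}\mathsf{k}_{2}^{h-1}$ of the fused twists, which guarantees both the existence of the local covectors $\langle S,n|$ and the non-vanishing of those local determinants; the determinant being polynomial in $\langle S|$ and the $\xi_{n}$, genericity follows. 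One small correction: the non-locality of the reduced operator that you single out as ``the main obstacle'' is not specific to $s_{n}>1/2$ --- it is already present in the fundamental case and is handled in exactly the same way by the identity asymptotics.

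The gap is in your last paragraph. A cyclic covector for the commutative algebra $\mathcal{A}$ generated by the transfer matrix does give $\dim\mathcal{A}=\dim\mathcal{H}$ and geometric multiplicity one, but it does not exclude nilpotents: a single Jordan block generates a maximal commutative algebra possessing both a cyclic vector and a cyclic covector. So semisimplicity is the entire issue, and your proposed resolution --- conjugate to the diagonal twist and ``exclude Jordan blocks from the separated action of $\mathsf{T}^{(D)}$'' --- does not work as stated: the separated action is tridiagonal in each $h_{n}$, and a complex tridiagonal matrix with nonzero off-diagonal entries can perfectly well have Jordan blocks; nor is the diagonal-twist case easier (Sklyanin's $\mathsf{B}^{(K)}$-construction degenerates there, and completeness of the Bethe ansatz is precisely what one wants to avoid invoking). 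The argument the paper relies on is the same degeneration used for the basis property: in the asymptotic regime the commuting family $\{\mathsf{T}^{(K|2s_{n})}(\xi_{n}^{(2s_{n}-1)})\}$ reduces to the purely local family $\{K_{n}^{(2s_{n})}\}$, which is diagonalizable with $\prod_{n}(2s_{n}+1)$ \emph{distinct} joint eigenvalues --- this is exactly where invertibility of $K$ enters, since $\mathsf{k}_{2}=0$ makes the fused twist degenerate --- and possessing the maximal number of distinct joint eigenvalues is an open (discriminant) condition on the parameters, hence generic; diagonalizability and simplicity of $\mathsf{T}^{(K)}(\lambda)$ then transfer to the whole family because each $\mathsf{T}^{(K|l)}$ is a polynomial in values of $\mathsf{T}^{(K)}$ by \eqref{detDl}. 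Alternatively, diagonalizability can be obtained a posteriori from the B\'ezout count of solutions of the discrete system \eqref{Discrete-ch-eq}, as in the theorem that follows the proposition.
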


\begin{proof}
This is a corollary of the general Propositions 2.4 and 2.5 of our previous paper 
\cite{MaiN18}. Indeed, the proof proceeds along the lines as in these
propositions once we observe that the fused, fundamental, $R$-matrices satisfy the
following identity \cite{KulRS81}:
\begin{equation}
R_{ab}^{(2s|2s)}(-(s-1/2)\eta )=P_{ab}\in \End(V_{a}^{(2s)}\otimes
V_{b}^{(2s)}),
\end{equation}%
where $P_{ab}\in \End(V_{a}^{(2s)}\otimes
V_{b}^{(2s)})$ is the permutation operator for these spaces, hence leading to the following representation (the Lax operator being essentially proportional to the corresponding $R$-matrix in the same tensor product of representations):%
\begin{align}
\mathsf{T}^{(K|2s_{n})}(\xi _{n}^{(2s_{n}-1)})&
=\gamma \, R_{nn-1}^{(2s_{n}|2s_{n-1})}(\xi _{n}^{(2s_{n}-1)}-\xi _{n-1})\cdots
R_{n1}^{(2s_{n}|2s_{1})}(\xi _{n}^{(2s_{n}-1)}-\xi _{1})K_{n}^{\left(
2s_{n}\right) }  \notag \\
& \times R_{n\mathsf{N}}^{(2s_{n}|2s_{\mathsf{N}})}(\xi _{n}^{(2s_{n}-1)}-\xi _{%
\mathsf{N}})\cdots R_{nn+1}^{(2s_{n}|2s_{n+1})}(\xi _{n}^{(2s_{n}-1)}-\xi
_{n+1}) ,
\end{align}%
for some non-zero constant $\gamma$.
Moreover the leading asymptotic of any $R$-matrices $%
R_{nm}^{(2s_{n}|2s_{m})}(\lambda )$ is proportional to the identity.
Then, the existence of each $\langle \omega_n|$ is implied by the fact that the matrices 
$K_{n}^{(2s_{n})}$ are all diagonalizable with simple spectrum.
\end{proof}

\textbf{Remark 1:} It is worth to point out that we have proven our
proposition only in the case in which the twist matrix $K\in \End(%
\mathbb{C}^{2})$ has not only simple spectrum but is in addition diagonalizable
and invertible. These requirements in the case of fundamental
representations are not needed, here they are imposed to get that the fused
twist matrices keep the simple spectrum nature which allows us to use the
general Propositions 2.4 of our first paper \cite{MaiN18}. It is then
interesting to point out that in the next subsection \ref{SoV-Good} the SoV
basis will be constructed in our approach using the commuting conserved
charges without imposing these additional constraints, but just asking the simplicity of
its spectrum which for a $2\times 2$ matrix is equivalent to ask that it isn't
proportional to the identity. \smallskip

\textbf{Remark 2:} As explained in our previous papers once the SoV basis is constructed, by
using the action of the quantum spectral invariants on some given generating
co-vector, then the transfer matrix fusion equations allow for the full
characterization of the transfer matrix spectrum. Indeed, the  transfer
matrix satisfies the property given in Proposition \ref{Effective-fusion}.  It is interesting to point out that the Proposition \ref%
{Effective-fusion} can be also derived as consequence of the SoV
characterization of the transfer matrix spectrum. Indeed, in Sklyanin's
framework, one can prove that any transfer matrix eigenvalue has to satisfy
the discrete system of equations $(\ref{Discrete-ch-eq})$ written bellow by computing the
action of the transfer matrix on the generic eigenvector in the SoV
representation, as it was done in the anti-periodic case in  \cite%
{Nic13b}. Then, the Proposition \ref{Effective-fusion} holds for any
diagonalizable and simple spectrum twist matrix; indeed by Proposition \ref%
{First-SoV-Basis} the transfer matrix share the same properties and then it
satisfies the system of equations $\left( \ref{Discrete-Charact-}\right) $.
Now it is enough to observe that the determinants on the l.h.s. of $\left( %
\ref{Discrete-Charact-}\right) $ are polynomials in the elements of the
twist matrix to derive that $\left( \ref{Discrete-Charact-}\right) $ has to
hold for any twist matrix as it holds for almost any value of its elements.
\smallskip

Here, instead, we use the Proposition \ref{Effective-fusion} to prove that any solution to this system of equations indeed
generates an eigenvalue and eigenvector in our SoV basis. Indeed, let us define the function:%
\begin{equation}
g_{a}(\lambda )=\prod_{b\neq a,b=1}^{\mathsf{N}}\frac{\lambda -\xi _{b}^{(0)}%
}{\xi _{a}^{(0)}-\xi _{b}^{(0)}}\ ,
\end{equation}%
then we can state the following:

\begin{theorem}
Under the same conditions allowing to define the SoV basis $\left( \ref%
{SoV-basis-0}\right) $, the spectrum of $\mathsf{T}^{(K)}(\lambda )$
coincides with the set of polynomials:%
\begin{equation}
\Sigma _{\mathsf{T}^{(K)}}\ =\left\{ t(\lambda ):t(\lambda )=\text{%
tr\thinspace }K\text{ }\prod_{a=1}^{\mathsf{N}}(\lambda -\xi
_{n}^{(0)})+\sum_{a=1}^{\mathsf{N}}g_{a}(\lambda )x_{a},\text{ \ \ }\forall
\{x_{1},...,x_{\mathsf{N}}\}\in D_{\mathsf{T}^{(K)}}\right\} ,
\label{Poly-F-t}
\end{equation}%
where $D_{\mathsf{T}^{(K)}}$ is the set of $\mathsf{N}$-tuples $%
\{x_{1},...,x_{\mathsf{N}}\}$ solutions to the following system of $\mathsf{N%
}$ equations:%
\begin{equation}
\det_{2s_{n}+1}D_{t,n}=0,\text{ \ \ }\forall n\in \{1,...,\mathsf{N}\},
\label{Discrete-ch-eq}
\end{equation}%
each of which is a degree $2s_{n}+1$ polynomial equation in the $\mathsf{N}$ unknowns $%
\{x_{1},...,x_{\mathsf{N}}\}$ for any fixed $n$. Here, we have defined:%
\begin{align}
& D_{t,n}\left. \equiv \right.  \notag \\
& \left( 
\begin{array}{cccccc}
t(\xi _{n}^{(0)}) & -\mathsf{k}_{1}a(\xi _{n}^{(0)}) & 0\ldots &  & 0 & 0 \\ 
-\mathsf{k}_{2}d(\xi _{n}^{(1)}) & t(\xi _{n}^{(1)}) & -\mathsf{k}_{1}a(\xi
_{n}^{(1)})\ldots &  & 0 & 0 \\ 
0 & {\quad }\ddots &  &  &  &  \\ 
\vdots &  & \ddots &  &  &  \\ 
\vdots &  & 0\ldots & -\mathsf{k}_{2}d(\xi _{n}^{(2s_{n}-1)}) & t(\xi
_{n}^{(2s_{n}-1)}) & -\mathsf{k}_{1}a(\xi _{n}^{(2s_{n}-1)}) \\ 
0 & \ldots & 0\ldots & 0 & -\mathsf{k}_{2}d(\xi _{n}^{(2s_{n})}) & t(\xi
_{n}^{(2s_{n})})%
\end{array}%
\right).
\end{align}%
Furthermore, for any $t(\lambda )\in \Sigma _{\mathsf{T}^{(K)}}$ the
following factorized wavefunction in the SoV co-vector basis\footnote{%
Here, we are using the notation $D_{t,n}^{(i,j)}$ to represent the $%
2s_{n}\times 2s_{n}$ matrix obtained by removing the line $i$ and the column 
$j$ from the matrix $D_{t,n}$.}:%
\begin{equation}
_f{\langle h_{1},...,h_{N}|}t\rangle =\prod_{n=1}^{\mathsf{N}}\left(
\det_{2s_{n}}D_{t,n}^{(2s_{n}+1,2s_{n}+1)}\right)^{h_{n}},
\end{equation}%
characterizes the associated unique eigenvector $|t\rangle $, up-to an
overall normalization that we have fixed by $\langle \Omega|t\rangle =1$.
\end{theorem}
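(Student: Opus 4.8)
The goal is a complete characterization of the spectrum of $\mathsf{T}^{(K)}(\lambda)$ in the SoV basis $\left(\ref{SoV-basis-0}\right)$, together with the explicit factorized wavefunctions. My plan is to proceed in three stages: first establish that every eigenvalue must be of the polynomial form displayed in $\left(\ref{Poly-F-t}\right)$ and must satisfy the discrete system $\left(\ref{Discrete-ch-eq}\right)$; second show that, conversely, every solution of that system produces a genuine eigenvector via the stated factorized wavefunction; third verify uniqueness and completeness by a dimension count.

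\emph{Necessity.} By Proposition \ref{Effective-fusion} the transfer matrix $\mathsf{T}^{(K)}(\lambda)$ is a degree $\mathsf{N}$ polynomial in $\lambda$ with central asymptotics $\operatorname{tr}K \cdot \lambda^{\mathsf{N}}$. Hence any eigenvalue $t(\lambda)$ is a degree $\mathsf{N}$ polynomial with the same leading coefficient $\operatorname{tr}K$; Lagrange interpolation at the $\mathsf{N}$ nodes $\{\xi_a^{(0)}\}$ (which are distinct by $\left(\ref{E-SOV}\right)$) then forces $t(\lambda)$ into the form $\left(\ref{Poly-F-t}\right)$ with $x_a = t(\xi_a^{(0)}) - \operatorname{tr}K\prod_b(\xi_a^{(0)}-\xi_b^{(0)})$, so the unknowns $x_a$ are just the values $t(\xi_a^{(0)})$ up to the fixed shift, and the $g_a$ are the Lagrange cofactors. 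Applying the discrete characterization $\left(\ref{Discrete-Charact-}\right)$ of Proposition \ref{Effective-fusion} to an eigenvector gives $\det_{2s_n+1}D_{2s_n+1}(t(\xi_n^{(2s_n)}))=0$ for each $n$; rewriting the tridiagonal determinant $D_{2s_n+1}$ evaluated along the chain of shifted arguments $\xi_n^{(2s_n)},\xi_n^{(2s_n-1)},\dots,\xi_n^{(0)}$ reproduces exactly the matrix $D_{t,n}$, so each eigenvalue satisfies $\left(\ref{Discrete-ch-eq}\right)$, i.e.\ $\{x_1,\dots,x_{\mathsf{N}}\}\in D_{\mathsf{T}^{(K)}}$.

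\emph{Sufficiency and the wavefunction.} For the converse I would fix a solution $\{x_a\}\in D_{\mathsf{T}^{(K)}}$, form the associated polynomial $t(\lambda)$, and define the candidate eigenvector $|t\rangle$ through its SoV coordinates $\underline{\langle h_1,\dots,h_{\mathsf{N}}|}\,t\rangle = \prod_n\left(\det_{2s_n}D_{t,n}^{(2s_n+1,2s_n+1)}\right)^{h_n}$. The key computation is to act with $\mathsf{T}^{(K)}(\lambda)$ on the left on the SoV covectors and check the eigenvalue relation coordinate by coordinate. Because the SoV basis $\left(\ref{SoV-basis-0}\right)$ is generated precisely by the fused transfer matrices $\mathsf{T}^{(K|2s_n)}(\xi_n^{(2s_n-1)})$, the action of $\mathsf{T}^{(K)}$ on any basis covector is controlled by the fusion relations $\left(\ref{Trans-fus}\right)$ and the determinant formula $\left(\ref{detDl}\right)$. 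The factorized ansatz is designed so that the fusion recursion, evaluated at the shifted points $\xi_n^{(k_n)}$, closes exactly when $\{x_a\}$ satisfies $\left(\ref{Discrete-ch-eq}\right)$: the vanishing of $\det_{2s_n+1}D_{t,n}$ is precisely the condition that the minor ratios $\det_{2s_n}D_{t,n}^{(2s_n+1,2s_n+1)}$ propagate consistently and that the ``boundary'' index $h_n=2s_n$ is consistently annihilated, matching the central zero $\left(\ref{Central-zeros}\right)$.

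\emph{Uniqueness and completeness.} Finally I would argue that the correspondence solution $\mapsto$ eigenvector is bijective. The factorized wavefunction determines $|t\rangle$ uniquely once normalized by $\langle S|t\rangle=1$, and distinct solutions $\{x_a\}$ give distinct polynomials $t(\lambda)$ hence distinct eigenvalues, so the eigenvectors are linearly independent. For completeness one invokes the diagonalizability and simplicity of $\mathsf{T}^{(K)}(\lambda)$ from Proposition \ref{First-SoV-Basis}: the number of eigenvectors must equal $\dim\mathcal{H} = \prod_n(2s_n+1)$, and counting the solutions of the system $\left(\ref{Discrete-ch-eq}\right)$ — each equation being degree $2s_n+1$ in a triangular-in-structure sense, with the $n$-th equation effectively fixing $x_n$ to one of $2s_n+1$ admissible values once the others are known — yields exactly $\prod_n(2s_n+1)$ solutions. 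I expect the main obstacle to be the sufficiency step: one must verify that the determinant minors $\det_{2s_n}D_{t,n}^{(2s_n+1,2s_n+1)}$ genuinely satisfy the same recursion in $h_n$ that the transfer matrix action imposes, and in particular that no spurious eigenvector with some coordinate failing to factorize can arise; controlling this requires a careful bookkeeping of the tridiagonal-minor identities (the three-term recursion among the $D_{t,n}$ minors) against the fusion relation $\left(\ref{Trans-fus}\right)$ rather than any genuinely new idea.
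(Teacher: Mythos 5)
Your necessity step is exactly the paper's: Proposition \ref{Effective-fusion} gives the degree, the asymptotics and the system \eqref{Discrete-Charact-}, and Lagrange interpolation at the $\xi_a^{(0)}$ puts $t(\lambda)$ in the form \eqref{Poly-F-t}. The gaps are in your sufficiency and counting steps. For sufficiency you propose to act with $\mathsf{T}^{(K)}(\lambda)$ directly on the covectors \eqref{SoV-basis-0} and check the eigenvalue relation coordinate by coordinate, claiming the fusion relations control this action. They do not, for this basis: the fusion relations \eqref{Trans-fus} reduce products of the form $\mathsf{T}^{(K)}(\lambda+l\eta)\,\mathsf{T}^{(K|l)}(\lambda)$ taken along the fusion tower, whereas the covectors \eqref{SoV-basis-0} involve \emph{powers} $\bigl(\mathsf{T}^{(K|2s_n)}(\xi_n^{(2s_n-1)})\bigr)^{h_n}$ of a single fused operator at a single point, and there is no closed two-term recursion expressing $\bigl(\mathsf{T}^{(K|2s_n)}(\xi_n^{(2s_n-1)})\bigr)^{h_n}\mathsf{T}^{(K)}(\xi_n^{(k)})$ back in such powers. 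The simple separate action you have in mind, \eqref{Bax-SoV}, holds only for the \emph{second} basis \eqref{SoV-basis-1} of subsection \ref{SoV-Good}, built from the full tower $\mathsf{T}^{(K|2s_n-h_n)}(\xi_n^{(2s_n)})$; you are conflating the two bases. The paper's actual route for the present theorem is entirely different: the existence of the SoV basis forces $\mathsf{T}^{(K)}(\lambda)$ to be diagonalizable with simple spectrum, hence there are exactly $d_{\{s_n\}}=\prod_n(2s_n+1)$ distinct eigenvalues, each producing a distinct solution of \eqref{Discrete-ch-eq}; B\'ezout's theorem then bounds the number of solutions above by the same $d_{\{s_n\}}$, so the two sets must coincide. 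The factorized wavefunction is then immediate, since an eigenvector of $\mathsf{T}^{(K)}$ is automatically an eigenvector of every fused $\mathsf{T}^{(K|l)}$ with eigenvalue $t^{(l)}=\det_l D_l(t)$.

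The second gap is that your completeness count is unjustified where it matters most. The equation $\det_{2s_n+1}D_{t,n}=0$ is a polynomial of degree $2s_n+1$ in the unknowns $\{x_1,\dots,x_{\mathsf{N}}\}$ \emph{jointly} (every entry $t(\xi_n^{(h)})$ depends on all the $x_a$ through the interpolation formula), so the system is fully coupled and the $n$-th equation does not ``fix $x_n$ to one of $2s_n+1$ admissible values once the others are known.'' The product $\prod_n(2s_n+1)$ is precisely the B\'ezout bound, and it is an upper bound on the number of isolated solutions only under the hypothesis that the defining polynomials have no common components (equivalently, that the solution variety is zero-dimensional). Without verifying this, the system could a priori possess a positive-dimensional family of solutions, most of which would not be eigenvalues, and the claimed bijection would fail. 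The paper closes this by an explicit degeneration: at $\mathsf{k}_1\neq 0$, $\mathsf{k}_2=0$ the system collapses to $\prod_{h=0}^{2s_n}t(\xi_n^{(h)})=0$, whose solution set is computed exactly and is finite with $d_{\{s_n\}}$ elements, and polynomiality in the twist parameters then propagates the no-common-components condition to generic $K$. Some argument of this kind (or another proof of zero-dimensionality) is indispensable and is missing from your proposal.
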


\begin{proof}
From the Proposition \ref{Effective-fusion} it follows that any eigenvalue $t(\lambda )\in
\Sigma _{\mathsf{T}^{(K)}}$ is indeed a degree $\mathsf{N}$ polynomial in $%
\lambda $ with central asymptotics $(\ref{Central-asymp})$ and solutions of
the system $(\ref{Discrete-Charact-})$ as a consequence of the fusion
equations.

We have now to prove the reverse statement that any solution of this system
of equations $\{x_{1},...,x_{\mathsf{N}}\}\in D_{\mathsf{T}^{(K)}}$ indeed
define through the polynomial interpolation formula $(\ref{Poly-F-t})$ a
transfer matrix eigenvalue. The very existence of the SoV basis, for almost
any value of the inhomogeneities, has as a direct consequence that the spectrum of all transfer matrices $\mathsf{T}^{(K|2s_{n})}$ is simple. All these fused transfer matrices being polynomials in the transfer matrix it implies (otherwise we get an immediate contradiction) that $\mathsf{T}^{(K)}(\lambda )$ has simple spectrum and it is diagonalizable. Indeed, the Proposition 2.5 of our first paper \cite{MaiN18} applies
here too. So that there exists $d_{\{s_{n}\}}$ different eigenvalues $t(\lambda )\in \Sigma _{\mathsf{T}^{(K)}}$, i.e. a number equal to the
dimension of the representation.

It is easy now to remark that, as the theorem states, $(\ref{Discrete-ch-eq}%
) $ is a system of $\mathsf{N}$ polynomial equations in the $\mathsf{N}$
unknowns $\{x_{a\leq \mathsf{N}}\}$ of degree $2s_{n}+1$ for any $n\in
\{1,...,\mathsf{N}\}$. Then, by the Theorem of Be\`{z}out \cite{Ful74}, we can distinguish two cases: i. the $\mathsf{N}$ polynomials of degree $2s_{n}+1$ in the $\mathsf{N}$ variables $\{x_{1},...,x_{\mathsf{N}}\}$, defining the system, have common components, i.e. one or more common roots, then the system admits infinite number of solutions $\{x_{1},...,x_{\mathsf{N}}\}$. ii. these $\mathsf{N}$ polynomials do not contain common components, then the system admits a finite number of solutions $\{x_{1},...,x_{\mathsf{N}}\}$ which counted with their multiplicity coincides with $d_{\{s_{n}\}}$. As we have already shown that to any $t(\lambda )\in \Sigma _{%
\mathsf{T}^{(K)}}$ it is uniquely associated a solution $\{x_{a\leq \mathsf{N%
}}\equiv t(\xi _{a\leq \mathsf{N}}^{(0)})\}\in D_{\mathsf{T}^{(K)}}$, then
we can state that the system $(\ref{Discrete-ch-eq})$ admits at least $%
d_{\{s_{n}\}}$ distinct solutions. Then, once the condition of no common
components is satisfied, we derive that the system admits exactly $%
d_{\{s_{n}\}}$ distinct solutions and each one is associated to a transfer
matrix eigenvalue, which complete the equivalence proof.

So we are left with the proof of this condition of no common components. In order to prove
this statement it is enough to show it for some special value of the twist
eigenvalues to imply its validity for almost any value of these parameters,
as the polynomials defining the system are polynomial in the twist parameters too. Let us here consider the case $\mathsf{k%
}_{1}\neq 0$ and $\mathsf{k}_{2}=0$, then the system of equations reads:%
\begin{equation}
\prod_{h=0}^{2s_{n}}t(\xi _{n}^{(h)})=0\text{ \ \ }\forall n\in \{1,...,%
\mathsf{N}\}.
\end{equation}%
Now taking into account that by definition $t(\lambda )$ is a degree $%
\mathsf{N}$ polynomial in $\lambda $ and that it holds:%
\begin{equation}
\xi _{n}^{(h)}\neq \xi _{m}^{(k)}\text{ \ \ }\forall h\in
\{0,...,2s_{n}\},k\in \{0,...,2s_{m}\},n\neq m\in \{1,...,\mathsf{N}\},
\end{equation}%
by the condition $\left( \ref{E-SOV}\right) $, then we have that a
solution to the system can be realized iff for any $n\in \{1,...,\mathsf{N}%
\} $ there exists a unique  $h_{n}\in \{0,...,2s_{n}\}$ such that 
$t(\xi _{n}^{(h_{n})})=0$, or equivalently:%
\begin{equation}
t_{\mathbf{h}}(\lambda )=\mathsf{k}_{1}\prod_{n=1}^{\mathsf{N}}(\lambda -\xi
_{n}^{(h_{n})}).
\end{equation}%
So we have that the system has exactly $d_{\{s_{n}\}}$ distinct solutions,
defined by fixing the $d_{\{s_{n}\}}$ distinct $\mathsf{N}$-upla $\mathbf{h}%
=\{h_{1}, ..., h_{\mathsf{N}}\}$ with $h_n \in \{0,...,2s_{n}\}$, $n=1,..., \mathsf{N}$, in this way implying the condition of no common components.

Finally, by definition of the SoV basis for any $t(\lambda )\in \Sigma _{%
\mathsf{T}^{(K)}}$ the uniquely associated eigenvector has the factorized
wavefunctions:%
\begin{equation}
_f{\langle h_{1},...,h_{N}|}t\rangle =\prod_{n=1}^{\mathsf{N}}\left(
t^{(2s_{n})}(\xi _{n}^{(2s_{n}-1)})\right) ^{h_{n}},
\end{equation}%
where the polynomial $t^{(2s_{n})}(\lambda )$ is the eigenvalue of the fused
transfer matrix $\mathsf{T}^{(K|2s_{n})}(\lambda )$ uniquely defined in
terms of the $t(\lambda )\in \Sigma _{\mathsf{T}^{(K)}}$ by the use of the
fusion equations. Then the statement of the theorem follows observing that
the following identities:%
\begin{equation}
t^{(2s_{n})}(\xi
_{n}^{(2s_{n}-1)})=\det_{2s_{n}}D_{t,n}^{(2s_{n}+1,2s_{n}+1)},
\end{equation}%
are direct consequences of the resolution of the fusion relations \eqref{Dl}, the definition \eqref{SoV-basis-0} and the fact that 
\begin{equation}
D_{2s_n + 1}(\mathsf{T}^{(K)}(\xi_{n}^{(2s_{n})})) | t\rangle = D_{t,n} | t\rangle
\end{equation}
together with $D_{l+1}(\mathsf{T}^{(K)}(\lambda))^{(l+1,l+1)} = D_{l}(\mathsf{T}^{(K)}(\lambda + \eta))$ and $\xi_{n}^{(2s_{n}-1)} = \xi_{n}^{(2s_{n})} + \eta$.

\end{proof}

\subsection{\label{SoV-Good}The natural SoV basis and the associated spectrum characterization}

Here we show that there exists a different choice of the SoV basis for which the action of 
the transfer matrix becomes very simple as a consequence of the fusion relations
%\footnote{It is worth remarking that towers of fused transfer matrices also play an important role in recent works on “quasilocal charges” in quantum integrable models, see for example [arXiv:1603.00440v2].} 
and
in particular of the Proposition \ref{Effective-fusion}.

\begin{theorem}
\label{Second-basis}The set consisting of the co-vectors%
\begin{equation}
\langle h_{1},...,h_{N}|\equiv \langle S|\prod_{n=1}^{\mathsf{N}}\frac{%
\mathsf{k}_{2}^{h_{n}-2s_{n}}\mathsf{T}^{(K|2s_{n}-h_{n})}(\xi
_{n}^{(2s_{n})})}{\prod_{k=0}^{2s_{n}-h_{n}-1}d(\xi _{n}^{(2s_{n}-k)})}%
\text{\ }\forall h_{a}\in \{0,...,2s_{a}\},a\in \{1,...,\mathsf{N}\}
\label{SoV-basis-1}
\end{equation}%
defines a co-vector basis of $\mathcal{H}$ for almost any choice of the
co-vector $\langle S|$ and of the inhomogeneity parameters $\{\xi
_{1},...,\xi _{\mathsf{N}}\}$ satisfying $\left( \ref{E-SOV}\right) $
under the condition $K\in \End(\mathbb{C}^{2})$ not proportional to
the identity and invertible\footnote{%
It is worth mentioning that the construction of the SoV basis both in 
Sklyanin's framework and in our current one can be developed also for
non-invertible simple twist matrices. However, we use here this  additional requirement just to
present a simpler form of the theorem and its proof.}.
Moreover, denoting  $\langle O| = \langle 0...,0|$ the co-vector $\langle h_{1},...,h_{N}|$ of this basis when all $h_i = 0$, we also have, using the higher fusion relation \eqref{higherfusionrelation}:
\begin{equation}
\langle h_{1},...,h_{N}|\equiv \langle O|\prod_{n=1}^{\mathsf{N}}\frac{%
\mathsf{T}^{(K|h_{n})}(\xi
_{n}^{(h_{n}-1)})}{\mathsf{k}_{1}^{h_{n}}\prod_{k=0}^{h_{n}-1}a(\xi _{n}^{(k)})} \, \, \, \, 
\text{\ }\forall h_{a}\in \{0,...,2s_{a}\},a\in \{1,...,\mathsf{N}\} \, \, \, \,
\label{SoV-basis-2}
\end{equation}%
Furthermore, we have:

a) if $b\neq 0$, fixing%
\begin{equation}
\langle S|\equiv \langle h_{1}=2s_{1},...,h_{\mathsf{N}}=2s_{\mathsf{N}%
}|_{Sk},\label{Sk-identification1}
\end{equation}%
meaning also that $\langle O| = \langle 0|$, then our SoV basis coincides with Sklyanin's type SoV basis, i.e. it
holds:%
\begin{equation}
\langle h_{1},...,h_{\mathsf{N}}|_{Sk}=\langle h_{1},...,h_{\mathsf{N}}|\ \
\forall h_{n}\in \{0,...,2s_{n}\},n\in \{1,...,\mathsf{N}\}.
\end{equation}

b) if $b=0$, fixing%
\begin{equation}
\langle S|\equiv \underline{\langle h_{1}=2s_{1},...,h_{\mathsf{N}}=2s_{%
\mathsf{N}}|}_{Sk},\label{Sk-identification2}
\end{equation}%
then our SoV basis coincides with Sklyanin's type SoV basis, i.e. it
holds:%
\begin{equation}
\underline{\langle h_{1},...,h_{\mathsf{N}}|}_{Sk}=\langle h_{1},...,h_{%
\mathsf{N}}|\text{ \ \ }\forall h_{n}\in \{0,...,2s_{n}\},n\in \{1,...,%
\mathsf{N}\}.
\end{equation}
\end{theorem}

\begin{proof}
Let us remark that the determinant of the $d_{\{s_{n}\}}\times d_{\{s_{n}\}}$
matrix whose columns are the elements of the co-vectors $\left( \ref%
{SoV-basis-1}\right) $ in the natural basis is a polynomial of order at most 
$d_{\{s_{n}\}}$ in the components of the co-vector $\langle S|$. So that it
is enough to prove that this determinant is nonzero for a special choice of $%
\langle S|$ to show that it is nonzero for almost any choice of $\langle
S| $. So to prove the theorem it is enough to prove that the statements a)
and b) hold.

Let us first observe that, by using the quantum determinant condition, we have
the following rewriting of Sklyanin's type SoV basis:%
\begin{equation}
\langle h_{1},...,h_{\mathsf{N}}|_{Sk}=\langle
S|\prod_{n=1}^{\mathsf{N}}\prod_{k_{n}=h_{n}+1}^{2s_{n}}\frac{\mathsf{D}^{(K)}(\xi
_{n}^{(k_{n})})}{d(\xi _{n}^{(k_{n})})\mathsf{k}_{2}},
\end{equation}%
and%
\begin{equation}
\underline{\langle h_{1},...,h_{\mathsf{N}}|}_{Sk}=
\langle S|\prod_{n=1}^{\mathsf{N}}\prod_{k_{n}=h_{n}+1}^{2s_{n}}\frac{\mathsf{D}^{(%
\bar{K})}(\xi _{n}^{(k_{n})})}{d(\xi _{n}^{(k_{n})})\mathsf{k}_{2}}\mathcal{W%
}_{K}^{-1}.
\end{equation}%
So let us prove now the statement a). The proof is done by induction, in
fact it holds if $h_{i}=2s_{i}$ for all the $i\in \{1,...,\mathsf{N}\}$, by
the given choice of the co-vector $\langle S|$. So we assume that our
statement holds for any $\mathsf{N}$-upla $\{h_{i\leq \mathsf{N}}\}$ such
that $h_{i}\geq \bar{h}_{i}\geq 1$ for some fixed $\mathsf{N}$-upla $\{\bar{h%
}_{1},...,\bar{h}_{\mathsf{N}}\}$ and then we prove that it holds for the $%
\mathsf{N}$-upla $\{h_{1},...,\bar{h}_{n}-1,...,h_{\mathsf{N}}\}$ for any $%
n\in \{1,...,\mathsf{N}\}$. In order to do so, we expand the following
co-vector:%
\begin{align}
\langle h_{1},...,\bar{h}_{n}-1,...,&h_{\mathsf{N}}| \equiv \langle
h_{1},...,\hat{h}_{n}=2s_{n},...,h_{\mathsf{N}}|\frac{\mathsf{k}_{2}^{(\bar{h%
}_{n}-(1+2s_{n}))}\mathsf{T}^{(K|2s_{n}+1-\bar{h}_{n})}(\xi _{n}^{(2s_{n})})%
}{\prod_{k=0}^{2s_{n}-\bar{h}_{n}}d(\xi _{n}^{(2s_{n}-k)})} \notag\\
& =\langle h_{1},...,\hat{h}_{n}=2s_{n},...,h_{\mathsf{N}}|\frac{\mathsf{k}%
_{2}^{(\bar{h}_{n}-(1+2s_{n}))}}{\prod_{k=0}^{2s_{n}-\bar{h}_{n}}d(\xi
_{n}^{(2s_{n}-k)})}  \notag \\
& \times ( \mathsf{T}^{(K)}(\xi _{n}^{(\bar{h}_{n})})\mathsf{T}%
^{(K|2s_{n}-\bar{h}_{n})}(\xi _{n}^{(2s_{n})})-\Delta^{(K)}_{\eta}(\xi _{n}^{(\bar{h}_{n})})\mathsf{T}^{(K|2s_{n}-(\bar{h}%
_{n}+1))}(\xi _{n}^{(2s_{n})})) ,
\end{align}%
where for any $i\in \{1,...,\mathsf{N}\}\backslash \{n\}$ we have fixed $%
h_{i}\geq \bar{h}_{i}\geq 1$, which by definition implies:%
\begin{align}
\langle h_{1},...,\bar{h}_{n}-1,...,h_{\mathsf{N}}|& =\langle h_{1},...,\bar{%
h}_{n},...,h_{\mathsf{N}}|\frac{\mathsf{T}^{(K)}(\xi _{n}^{(\bar{h}_{n})})}{%
d(\xi _{n}^{(\bar{h}_{n})})\mathsf{k}_{2}}  \notag \\
& -\frac{\Delta^{(K)}_{\eta}(\xi _{n}^{(\bar{h}_{n})})}{\mathsf{k}%
_{2}^{2}d(\xi _{n}^{(\bar{h}_{n}+1)})d(\xi _{n}^{(\bar{h}_{n})})}\langle
h_{1},...,\bar{h}_{n}+1,...,h_{\mathsf{N}}|.
\end{align}%
Then the induction hypothesis implies that it holds:%
\begin{align}
\langle h_{1},...,\bar{h}_{n}-1,...,h_{\mathsf{N}}|& =\langle h_{1},...,\bar{%
h}_{n},...,h_{\mathsf{N}}|_{Sk}\frac{\mathsf{T}^{(K)}(\xi _{n}^{(\bar{h}%
_{n})})}{d(\xi _{n}^{(\bar{h}_{n})})\mathsf{k}_{2}}  \notag \\
& -\frac{\Delta^{(K)}_{\eta}(\xi _{n}^{(\bar{h}_{n})})}{\mathsf{k}%
_{2}^{2}d(\xi _{n}^{(\bar{h}_{n}+1)})d(\xi _{n}^{(\bar{h}_{n})})}\langle
h_{1},...,\bar{h}_{n}+1,...,h_{\mathsf{N}}|_{Sk}.  \label{Interm-ID}
\end{align}%
Now by definition of the transfer matrix and  Sklyanin's type SoV basis
we have that it holds:%
\begin{align}
\langle h_{1},...,\bar{h}_{n},...,h_{\mathsf{N}}|_{Sk}\frac{\mathsf{T}%
^{(K)}(\xi _{n}^{(\bar{h}_{n})})}{d(\xi _{n}^{(\bar{h}_{n})})\mathsf{k}_{2}}%
& =\langle h_{1},...,\bar{h}_{n}-1,...,h_{\mathsf{N}}|_{Sk}  \notag \\
& +\langle h_{1},...,\bar{h}_{n}+1,...,h_{\mathsf{N}}|_{Sk}\frac{\mathsf{D}%
^{(K)}(\xi _{n}^{(\bar{h}_{n}+1)})\mathsf{A}^{(K)}(\xi _{n}^{(\bar{h}_{n})})%
}{\mathsf{k}_{2}^{2}d(\xi _{n}^{(\bar{h}_{n}+1)})d(\xi _{n}^{(\bar{h}_{n})})}%
.
\end{align}%
So that by using the quantum determinant identity:%
\begin{equation}
\Delta^{(K)}_{\eta}(\lambda )=\mathsf{D}^{(K)}(\lambda -\eta )\mathsf{%
A}^{(K)}(\lambda )-\mathsf{B}^{(K)}(\lambda -\eta )\mathsf{C}^{(K)}(\lambda )
\end{equation}%
it holds:%
\begin{align}
& \langle h_{1},...,\bar{h}_{n}+1,...,h_{\mathsf{N}}|_{Sk}\frac{\mathsf{D}%
^{(K)}(\xi _{n}^{(\bar{h}_{n}+1)})\mathsf{A}^{(K)}(\xi _{n}^{(\bar{h}_{n})})%
}{\mathsf{k}_{2}^{2}d(\xi _{n}^{(\bar{h}_{n}+1)})d(\xi _{n}^{(\bar{h}_{n})})}
\notag \\
& \left. =\right. \langle h_{1},...,\bar{h}_{n}+1,...,h_{\mathsf{N}}|_{Sk}%
\frac{\Delta^{(K)}_{\eta}(\xi _{n}^{(\bar{h}_{n})})}{\mathsf{k}%
_{2}^{2}d(\xi _{n}^{(\bar{h}_{n}+1)})d(\xi _{n}^{(\bar{h}_{n})})},
\end{align}%
being%
\begin{equation}
\langle h_{1},...,\bar{h}_{n}+1,...,h_{\mathsf{N}}|_{Sk}\frac{\mathsf{B}%
^{(K)}(\xi _{n}^{(\bar{h}_{n}+1)})\mathsf{C}^{(K)}(\xi _{n}^{(\bar{h}_{n})})%
}{\mathsf{k}_{2}^{2}d(\xi _{n}^{(\bar{h}_{n}+1)})d(\xi _{n}^{(\bar{h}_{n})})}%
=0,
\end{equation}%
by definition of Sklyanin's type SoV basis. So that replacing these
results in $\left( \ref{Interm-ID}\right) $ we get our identity:%
\begin{equation}
\langle h_{1},...,\bar{h}_{n}-1,...,h_{\mathsf{N}}|\equiv \langle h_{1},...,%
\bar{h}_{n}-1,...,h_{\mathsf{N}}|_{Sk},
\end{equation}%
for any $h_{i}\geq \bar{h}_{i}\geq 1$ with $i\in \{1,...,\mathsf{N}%
\}\backslash \{n\}$, which proves the induction. Note that the second representation, using $\langle 0|$ as starting co-vector, can be either proven directly using the higher fusion relation \eqref{higherfusionrelation} or again along a similar proof as above using recursively the standard fusion relations. In the case b), following
the same proof of case a) for the transfer matrices%
\begin{equation}
\mathsf{T}^{(\bar{K}|l)}(\lambda )=\mathcal{W}_{K}^{-1}\mathsf{T}%
^{(K|l)}(\lambda )\mathcal{W}_{K}
\end{equation}%
we show the identities:%
\begin{equation}
\underline{\langle h_{1},...,h_{\mathsf{N}}|}_{Sk}\mathcal{W}_{K}=\underline{%
\langle h_{1}=2s_{1},...,h_{\mathsf{N}}=2s_{\mathsf{N}}|}_{Sk}\mathcal{W}%
_{K}\prod_{n=1}^{\mathsf{N}}\frac{\mathsf{k}_{2}^{(h_{n}-2s_{n})}\mathsf{T}%
^{(\bar{K}|2s_{n}-h_{n})}(\xi _{n}^{(2s_{n})})}{%
\prod_{k=0}^{2s_{n}-(h_{n}+1)}d(\xi _{n}^{(2s_{n}-k)})}
\end{equation}%
for any $h_{n}\in \{0,...,2s_{n}\}$ and $n\in \{1,...,\mathsf{N}\}$ from
which our statement easily follows.
\end{proof}

As we have anticipated before this is the natural SoV basis as the action of the transfer matrix on it is easily derived by the fusion identities.
Indeed, we have the following:

\begin{proposition}
The transfer matrix $\mathsf{T}^{(K)}(\lambda )$ has the following
separate action on the SoV basis co-vectors: 
\begin{equation}
\langle h_{1},...,h_{\mathsf{N}}|\mathsf{T}^{(K)}(\xi _{n}^{(h_{n})})=%
\mathsf{k}_{1}a(\xi _{n}^{(h_{n})})\langle h_{1},...,h_{n}+1,...,h_{\mathsf{N%
}}|+\mathsf{k}_{2}d(\xi _{n}^{(h_{n})})\langle h_{1},...,h_{n}-1,...,h_{%
\mathsf{N}}|,  \label{Bax-SoV}
\end{equation}
\end{proposition}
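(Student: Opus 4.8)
The plan is to use two facts: that all the fused transfer matrices $\mathsf{T}^{(K|l)}(\mu)$ form a commuting family (so that $\mathsf{T}^{(K)}(\xi_n^{(h_n)})$ acts on the covector \eqref{SoV-basis-1} simply by multiplying into the single factor labelled by the site $n$), and that the fusion relation \eqref{Trans-fus} re-expresses the resulting product in terms of the neighbouring fused transfer matrices. First I would record the elementary consequences of $\xi_n^{(k)}=\xi_n^-+(s_n-k)\eta$ that make the shifts match, namely
\begin{equation}
\xi_n^{(2s_n)}+(2s_n-h_n)\eta=\xi_n^{(h_n)},\qquad \xi_n^{(h_n)}-\eta=\xi_n^{(h_n+1)}.
\end{equation}

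Then I would apply the fusion relation \eqref{Trans-fus} at $\lambda=\xi_n^{(2s_n)}$ with $l=2s_n-h_n$; by the first identity above the shifted argument $\lambda+l\eta$ is exactly $\xi_n^{(h_n)}$, so that
\begin{equation}
\mathsf{T}^{(K)}(\xi_n^{(h_n)})\,\mathsf{T}^{(K|2s_n-h_n)}(\xi_n^{(2s_n)})=\mathsf{T}^{(K|2s_n-h_n+1)}(\xi_n^{(2s_n)})+\det\nolimits_{q}\mathsf{M}^{(K|1)}(\xi_n^{(h_n)})\,\mathsf{T}^{(K|2s_n-h_n-1)}(\xi_n^{(2s_n)}).
\end{equation}
Since $2s_n-h_n\pm1=2s_n-(h_n\mp1)$, the two fused transfer matrices on the right-hand side are precisely the building blocks appearing in the covectors $\langle\dots,h_n-1,\dots|$ and $\langle\dots,h_n+1,\dots|$; all that remains is to reinstate the scalar normalisations of \eqref{SoV-basis-1}.

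The final step is the bookkeeping of these prefactors. Comparing the normalisation $\mathsf{k}_2^{h_n-2s_n}/\prod_{k=0}^{2s_n-(h_n+1)}d(\xi_n^{(2s_n-k)})$ at consecutive values of $h_n$, the ratio attached to the first term yields the coefficient $\mathsf{k}_2 d(\xi_n^{(h_n)})$, giving directly the $\langle\dots,h_n-1,\dots|$ contribution. For the second term the analogous ratio produces $1/(\mathsf{k}_2 d(\xi_n^{(h_n+1)}))$, which multiplied by the central value $\det_{q}\mathsf{M}^{(K|1)}(\xi_n^{(h_n)})=\mathsf{k}_1\mathsf{k}_2\,a(\xi_n^{(h_n)})\,d(\xi_n^{(h_n+1)})$ — where I have used the second spectral identity together with $\det_{q}\mathsf{M}^{(K|1)}(\lambda)=\mathsf{k}_1\mathsf{k}_2\,a(\lambda)d(\lambda-\eta)$ — collapses to $\mathsf{k}_1 a(\xi_n^{(h_n)})$, exactly as claimed.

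I expect the only point requiring care to be this tracking of the normalisations, complicated by the fact that the fused-transfer-matrix label $2s_n-h_n$ runs opposite to $h_n$, rather than any genuine obstruction. The boundary cases are handled automatically: for $h_n=2s_n$ one uses $l=0$ with the convention $\mathsf{T}^{(K|-1)}\equiv0$, and the spurious $\langle\dots,h_n+1,\dots|$ term is killed by $a(\xi_n^{(2s_n)})=0$, while for $h_n=0$ the spurious $\langle\dots,h_n-1,\dots|$ term is killed by $d(\xi_n^{(0)})=0$, consistently with the convention that covectors with an index outside $\{0,\dots,2s_n\}$ vanish.
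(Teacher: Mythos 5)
Your proposal is correct and follows essentially the same route as the paper: both apply the fusion relation \eqref{Trans-fus} at $\lambda=\xi_n^{(2s_n)}$ with $l=2s_n-h_n$ to rewrite $\mathsf{T}^{(K)}(\xi_n^{(h_n)})\,\mathsf{T}^{(K|2s_n-h_n)}(\xi_n^{(2s_n)})$ as a combination of $\mathsf{T}^{(K|2s_n-(h_n\mp1))}(\xi_n^{(2s_n)})$, and then match the normalisations of \eqref{SoV-basis-1}, with the quantum determinant $\mathsf{k}_1\mathsf{k}_2\,a(\xi_n^{(h_n)})d(\xi_n^{(h_n+1)})$ collapsing to the coefficient $\mathsf{k}_1 a(\xi_n^{(h_n)})$. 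Your explicit treatment of the boundary cases $h_n=0$ and $h_n=2s_n$ is a welcome addition that the paper leaves implicit.
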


\begin{proof}
Let us compute:%
\begin{equation}
\langle h_{1},...,h_{\mathsf{N}}|\mathsf{T}^{(K)}(\xi _{n}^{(h_{n})}),
\end{equation}%
by definition it can be rewritten as:%
\begin{equation}
\langle h_{1},...,\hat{h}_{n}=2s_{n},...,h_{\mathsf{N}}|\frac{\mathsf{k}%
_{2}^{(h_{n}-2s_{n})}\mathsf{T}^{(K|2s_{n}-h_{n})}(\xi _{n}^{(2s_{n})})}{%
\prod_{k=0}^{2s_{n}-(h_{n}+1)}d(\xi _{n}^{(2s_{n}-k)})}\mathsf{T}^{(K)}(\xi
_{n}^{(h_{n})}),
\end{equation}%
now we can use the fusion relations to rewrite the following operator product:%
\begin{equation}
\mathsf{T}^{(K)}(\xi _{n}^{(h_{n})})\mathsf{T}^{(K|2s_{n}-h_{n})}(\xi
_{n}^{(2s_{n})})=\Delta^{(K)}_{\eta}(\xi _{n}^{(h_{n})})\mathsf{T}%
^{(K|2s_{n}-(h_{n}+1))}(\xi _{n}^{(2s_{n})})+\mathsf{T}%
^{(K|2s_{n}-(h_{n}-1))}(\xi _{n}^{(2s_{n})}),
\end{equation}%
so that using the l.h.s. we get:%
\begin{align}
& \langle h_{1},...,h_{\mathsf{N}}|\mathsf{T}^{(K)}(\xi _{n}^{(h_{n})}) 
\notag \\
& =\langle h_{1},...,\hat{h}_{n}=2s_{n},...,h_{\mathsf{N}}|\mathsf{k}%
_{1}a(\xi _{n}^{(h_{n})})\frac{\mathsf{k}_{2}^{(h_{n}+1-2s_{n})}\mathsf{T}%
^{(K|2s_{n}-(h_{n}+1))}(\xi _{n}^{(2s_{n})})}{%
\prod_{k=0}^{2s_{n}-(h_{n}+2)}d(\xi _{n}^{(2s_{n}-k)})}  \notag \\
& +\langle h_{1},...,\hat{h}_{n}\left. =\right. 2s_{n},...,h_{\mathsf{N}}|%
\mathsf{k}_{2}d(\xi _{n}^{(h_{n})})\frac{\mathsf{k}_{2}^{(h_{n}-1-2s_{n})}%
\mathsf{T}^{(K|2s_{n}-(h_{n}-1))}(\xi _{n}^{(2s_{n})})}{%
\prod_{k=0}^{2s_{n}-h_{n}}d(\xi _{n}^{(2s_{n}-k)})}
\end{align}%
from which our statement follows by the definition of the SoV basis. Again here, we could also start from the second representation having $\langle O|$ as starting co-vector, and get from there the same result.
\end{proof}

From the above proposition, we get the following characterization of the transfer matrix spectrum in our new SoV basis. Clearly the discrete
characterization of the transfer matrix eigenvalues is independent  of the
chosen SoV basis so it coincides with the one we have given in the theorem
of the previous section. Of course what changes is the characterization of
the transfer matrix eigenvectors in the new SoV basis; indeed, it holds:

\begin{theorem}
Let us assume $K\in \End(\mathbb{C}^{2})$ not proportional to the
identity and invertible, then for almost any value of the inhomogeneities $%
\mathsf{T}^{(K)}(\lambda )$ has simple spectrum, is diagonalizable and the
set of its eigenvalues $\Sigma _{\mathsf{T}^{(K)}}\ $coincides with the set
of degree $\mathsf{N}$ polynomials $\left( \ref{Poly-F-t}\right) $. For any $%
t(\lambda )\in \Sigma _{\mathsf{T}^{(K)}}$ the associated unique (up-to
normalization fixed by $\langle S|t\rangle =1$) eigenvector $|t\rangle $ has
the following factorized wavefunction in the SoV co-vector basis:%
\begin{equation}
\langle h_{1},...,h_{N}|t\rangle =\prod_{n=1}^{\mathsf{N}}\frac{\mathsf{k}%
_{2}^{(h_{n}-2s_{n})}t^{(2s_{n}-h_{n})}(\xi _{n}^{(2s_{n})})}{%
\prod_{k=0}^{2s_{n}-(h_{n}+1)}d(\xi _{n}^{(2s_{n}-k)})}, 
\label{EigenV-SoV2}
\end{equation}
where $t^{(2s_{n}-h_{n})}(\xi _{n}^{(2s_{n})})$ is the eigenvalue of the  fused transfer matrix $T^{(2s_{n}-h_{n})}(\xi _{n}^{(2s_{n})})$ associated to the eigenvector $|t \rangle$.
\end{theorem}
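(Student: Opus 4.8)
The plan is to obtain this theorem almost entirely from results already in hand, isolating as the one genuinely new ingredient the factorized wavefunction $(\ref{EigenV-SoV2})$. First I would note that the three assertions ``simple spectrum'', ``diagonalizable'', and ``$\Sigma_{\mathsf{T}^{(K)}}$ coincides with the polynomial set $(\ref{Poly-F-t})$'' are all properties of the single operator family $\mathsf{T}^{(K)}(\lambda)$ and of the solution set $D_{\mathsf{T}^{(K)}}$ of the discrete system $(\ref{Discrete-ch-eq})$; they do not refer to any basis. Since they were established in the theorem of Subsection \ref{Bas1} (themselves resting on Proposition \ref{First-SoV-Basis} and Proposition \ref{Effective-fusion}), they carry over verbatim, exactly as the text already remarks. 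What must still be proved is therefore only the form of the eigenvector components in the basis $(\ref{SoV-basis-1})$.

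For the wavefunction I would argue directly from the definition $(\ref{SoV-basis-1})$ of the covectors, paralleling the end of the proof in Subsection \ref{Bas1}. Diagonalizability and simplicity give, for each $t(\lambda)\in\Sigma_{\mathsf{T}^{(K)}}$, a one-dimensional eigenspace, hence a unique eigenvector $|t\rangle$ up to scale. The resolution $(\ref{detDl})$ of the fusion relations writes every fused transfer matrix as the polynomial $\mathsf{T}^{(K|l)}(\lambda)=\det_{l}D_{l}(\mathsf{T}^{(K)}(\lambda))$ in the commuting family $\mathsf{T}^{(K)}$, so $|t\rangle$ is automatically a joint eigenvector of the whole tower, with $\mathsf{T}^{(K|l)}(\lambda)|t\rangle = t^{(l)}(\lambda)|t\rangle$ where $t^{(l)}(\lambda)=\det_{l}D_{l}(t(\lambda))$ is fixed by $t(\lambda)$ through the same fusion. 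Pulling each factor $\mathsf{T}^{(K|2s_{n}-h_{n})}(\xi_{n}^{(2s_{n})})$ in $(\ref{SoV-basis-1})$ onto its eigenvalue then yields
\begin{equation}
\langle h_{1},\ldots ,h_{\mathsf{N}}|t\rangle =\langle S|t\rangle \prod_{n=1}^{\mathsf{N}}\frac{\mathsf{k}_{2}^{(h_{n}-2s_{n})}\,t^{(2s_{n}-h_{n})}(\xi _{n}^{(2s_{n})})}{\prod_{k=0}^{2s_{n}-(h_{n}+1)}d(\xi _{n}^{(2s_{n}-k)})},
\end{equation}
which is precisely $(\ref{EigenV-SoV2})$ after imposing $\langle S|t\rangle =1$. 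This is the conceptual payoff of the ``natural'' basis: the covectors are built directly from the fused charges, so the whole tower acts diagonally on $|t\rangle$ and the components are read off with no further computation.

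It then remains to check that the normalization $\langle S|t\rangle =1$ is admissible, i.e. that $\langle S|t\rangle\neq 0$. Here I would use that $\{\langle \mathbf{h}|\}$ is a basis (Theorem \ref{Second-basis}): were $\langle S|t\rangle$ to vanish, the displayed identity would force $\langle \mathbf{h}|t\rangle =0$ for every $\mathbf{h}$, whence $|t\rangle =0$, a contradiction. One can add the consistency remark that $\langle S|=\langle 2s_{1},\ldots ,2s_{\mathsf{N}}|$ in the basis $(\ref{SoV-basis-1})$, so that $\langle S|t\rangle =1$ is exactly the $h_{n}=2s_{n}$ instance of $(\ref{EigenV-SoV2})$, compatible with $t^{(0)}\equiv 1$ and the empty products equal to one. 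As an independent cross-check, $(\ref{EigenV-SoV2})$ must obey the three-term recursion obtained by evaluating the separate action $(\ref{Bax-SoV})$ on $|t\rangle$; this holds automatically, since that separate action was itself derived from the fusion relation defining the $t^{(l)}$, with the vanishings $a(\xi_{n}^{(2s_{n})})=0$ and $d(\xi_{n}^{(0)})=0$ supplying the correct boundary closure at $h_{n}=2s_{n}$ and $h_{n}=0$.

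I do not expect a serious obstacle: the substantive work—that $(\ref{SoV-basis-1})$ is a basis, and that $\mathsf{T}^{(K)}(\lambda)$ is diagonalizable with simple spectrum and has the stated eigenvalue set—has already been carried out in Theorem \ref{Second-basis}, Proposition \ref{First-SoV-Basis} and Proposition \ref{Effective-fusion}. The only points needing care are the \emph{basis-independence} of the discrete characterization, so that the eigenvalue set is inherited rather than recomputed, and the short nondegeneracy argument $\langle S|t\rangle\neq 0$; both are light.
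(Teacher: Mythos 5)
Your proposal is correct and follows essentially the same route as the paper, which likewise reduces the spectrum statements to the theorem of the previous subsection and reads off the wavefunction \eqref{EigenV-SoV2} by letting the fused transfer matrices in the definition \eqref{SoV-basis-1} act on $|t\rangle$, the nondegeneracy of $\langle S|t\rangle$ being handled exactly as you do. The one nuance worth noting is that the paper promotes what you call a cross-check — verifying via the separate action \eqref{Bax-SoV} that the vector defined by \eqref{EigenV-SoV2} is an eigenvector — to a simple \emph{alternative} proof that every solution of the discrete system \eqref{Discrete-ch-eq} yields an eigenvalue, which is also what makes the statement available under the weaker hypothesis of this theorem ($K$ merely not proportional to the identity and invertible) rather than the diagonalizability assumption underlying Proposition \ref{First-SoV-Basis}.
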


\begin{proof}
The proof can be done along the same lines as in the theorem of the previous section. It is however
interesting to point out that in the current SoV basis the direct action of
the transfer matrix allows to provide a simple and alternative proof
of the fact that any solution $\{x_{1},...,x_{\mathsf{N}}\}\in D_{\mathsf{%
T}^{(K)}}$ defines a transfer matrix eigenvalue through the polynomial
interpolation formula $(\ref{Poly-F-t})$. Indeed, the following identity:%
\begin{equation}
\langle h_{1},...,h_{\mathsf{N}}|\mathsf{T}^{(K)}(\xi _{n}^{(h_{n})})|t\rangle =t(\xi
_{n}^{(h_{n})})\langle h_{1},...,h_{\mathsf{N}}|t\rangle ,\text{ }\forall h_{n}\in
\{0,...,2s_{n}\},n\in \{1,...,\mathsf{N}\},
\end{equation}%
is trivially deduced from the definition $(\ref{EigenV-SoV2})$ of the state $%
|t\rangle $ and the previous proposition on the action of the transfer
matrix on the SoV basis. Then the above results together with the 
asymptotics of the polynomials $(\ref{Poly-F-t})$ imply our statement.
\end{proof}

\section{Q-operator and quantum spectral curve}

\subsection{The quantum spectral curve}

A functional equation which provides an equivalent characterization of the
SoV discrete characterization of the spectrum we derived above, the
so-called quantum spectral curve, is given here. In the case at hand it
is a second order Baxter difference equation.

\begin{theorem}
\label{func-Eq-Ch}Let the twist matrix $K\in \text{End}(\mathbb{C}^{2})$ be
such that\footnote{%
Note that the reformulation in terms of functional equations indeed hold
also for the cases $\left( \mathsf{k}_{2}=0,\mathsf{k}_{1}\neq 0\right) $
and $\left( \mathsf{k}_{1}=0,\mathsf{k}_{2}\neq 0\right) $. Note that for
these cases both the spectrum of the transfer matrix and of the associated $%
Q $-functions are explicitly known, see the proof of this theorem.} $\mathsf{%
k}_{1}\neq \mathsf{k}_{2}$, $\mathsf{k}_{1}\neq 0$, $\mathsf{k}_{2}\neq 0$
and the inhomogeneities $\{\xi _{1},...,\xi _{\mathsf{N}}\}\in \mathbb{C}^{%
\mathsf{N}}$ satisfy the condition $(\ref{E-SOV})$. Then an entire
function $t(\lambda )$ is an element of $\Sigma _{\mathsf{T}^{(K)}}$ iff
there exists a unique polynomial:%
\begin{equation}
Q_{t}(\lambda )=\prod_{a=1}^{\mathsf{M}}(\lambda -\lambda _{a})\text{, with }%
\mathsf{M}\leq \mathsf{N}_{s}\equiv 2\sum_{n=1}^{\mathsf{N}}s_{n},  \label{Q-form}
\end{equation}%
such that $\lambda _{a}\neq \xi _{b}^{(2s_{b})}$ for any $\left( a,b\right) \in \{1,...,\mathsf{M}\}\times \{1,...,\mathsf{N}%
\},$ satisfying the following quantum spectral curve functional equation:%
\begin{equation}
\mathsf{k}_{1}^2a(\lambda )a(\lambda-\eta)Q_{t}(\lambda -2\eta )-\mathsf{k}_{1}a(\lambda )t(\lambda -\eta
)Q_{t}(\lambda -\eta )+\Delta^{(K)}_{\eta}(\lambda )Q_{t}(\lambda )=0,
\label{t-q-gl2}
\end{equation}%
%where we have defined:%
%\begin{equation}
%\alpha (\lambda )=\beta (\lambda )\beta (\lambda -\eta ),\text{ }\beta
%(\lambda )=\mathsf{k}_{1}a(\lambda ).
%\end{equation}%
Equivalently, shifting $\lambda$ by $\eta$ and dividing by common factor polynomials, it can be written as:
\begin{equation}
t(\lambda) Q_{t}(\lambda ) = \mathsf{k}_{1}a(\lambda ) Q_{t}(\lambda -\eta ) + \mathsf{k}_{2}d(\lambda ) Q_{t}(\lambda +\eta )
\label{t-q-gl2-bis}
\end{equation}%
Up to an overall normalization the associated transfer matrix eigenvector $%
|t\rangle $ admits the following rewriting in the left SoV basis:%
\begin{equation}
\langle h_{1},...,h_{\mathsf{N}}|t\rangle =\prod_{n=1}^{\mathsf{N}}Q_{t}(\xi _{n}^{(h_{n})}).
\label{SoV-Q}
\end{equation}
\end{theorem}

\begin{proof}
Let us start proving that the quantum spectral curve equation admits at most
one polynomial solution $Q_{t}(\lambda )$ for a given function $t(\lambda )$%
. Indeed, if we assume the existence of two such polynomial solutions $%
P(\lambda )$ and $Q(\lambda )$, it holds: 
\begin{equation}
\frac{\mathsf{k}_{1}a(\lambda )P(\lambda -\eta )+\mathsf{k}_{2}d(\lambda
)P(\lambda +\eta )}{P(\lambda )}=\frac{\mathsf{k}_{1}a(\lambda )Q(\lambda
-\eta )+\mathsf{k}_{2}d(\lambda )Q(\lambda +\eta )}{Q(\lambda )},
\end{equation}%
which can be rewritten 
\begin{equation}
\mathsf{k}_{1}a(\lambda )\,W_{P,Q}(\lambda )=\mathsf{k}_{2}d(\lambda
)\,W_{P,Q}(\lambda +\eta ).
\end{equation}%
$W_{P,Q}(\lambda )$ is the quantum Wronskian of these two solutions: 
\begin{equation}
W_{P,Q}(\lambda )=Q(\lambda )\,P(\lambda -\eta )-P(\lambda )\,Q(\lambda
-\eta ).
\end{equation}%
Taking into account that the zeroes of $d(\lambda )$ coincides with those of $%
a(\lambda )$ shifted by $2s_{n}\eta $ for any $n\in \{1,...,\mathsf{N}\}$ it
follows:%
\begin{equation}
W_{P,Q}(\lambda )=w_{P,Q}(\lambda )\prod_{n=1}^{\mathsf{N}%
}\prod_{h_{n}=0}^{2s_{n}-1}(\lambda -\xi _{n}^{(h_{n})}),
\end{equation}%
where $w_{P,Q}(\lambda )$ is a polynomial in $\lambda $ which moreover has
to satisfy the following quasi-periodicity condition: 
\begin{equation}
\mathsf{k}_{1}w_{P,Q}(\lambda +\eta )=\mathsf{k}_{2}w_{P,Q}(\lambda ).
\end{equation}%
Being $\mathsf{k}_{1}\neq \mathsf{k}_{2}$ this implies\footnote{%
It is enough to compare the leading coefficients of the polynomial expansion
to deduce that the equation is never satisfied for any nonzero polynomial.} $%
w_{P,Q}(\lambda )=0$.

Let us now assume the existence of $Q_{t}(\lambda )$ satisfying with $%
t(\lambda )$ the functional equation \eqref{t-q-gl2}, then it follows that $%
t(\lambda )$ is a polynomial of degree $\mathsf{N}$ with leading coefficient 
$t_{\mathsf{N}+1}$ satisfying the equation:%
\begin{equation}
\mathsf{k}_{1}^{2}-\mathsf{k}_{1}\text{\thinspace }t_{\mathsf{N}+1}+\text{%
det\thinspace }K=0\text{,}  \label{Asymp-Ok}
\end{equation}%
which imposes $t_{\mathsf{N}+1}=$tr$K=\mathsf{k}_{1}+\mathsf{k}_{2}$. It is now easy
to verify that particularizing the functional equation in the points $%
\lambda =\xi _{n}^{(h_{n})}+\eta $, for any $h_{n}\in \{0,...,2s_{n}\}$,
then the condition $Q_{t}(\xi _{n}^{(2s_{n})})\neq 0$ implies that $%
t(\lambda )$ satisfies the equation $(\ref{Discrete-ch-eq})$ for any fixed $%
n\in \{1,...,\mathsf{N}\}$ which together with its asymptotic behavior
implies that $t(\lambda )$ is a transfer matrix eigenvalue.

The reverse statement is proven now. Let us assume that $t(\lambda )$ is a
transfer matrix eigenvalue then we can show the existence of the polynomial $%
Q_{t}(\lambda )$ of the form $(\ref{Q-form})$ satisfying the required functional
equation. On the l.h.s. of the equation there is a polynomial in $\lambda $
of maximal degree $2\mathsf{N}+\mathsf{M}$, with $\mathsf{M}\leq \mathsf{N}%
_{s}$, so that in order to prove that the functional equation is satisfied
we have to show that it is zero in $2\mathsf{N}+\mathsf{N}_{s}+1$ different
points. First, at infinity, thanks to $(\ref{Asymp-Ok})$, the leading coefficient of this
polynomial is zero as $t(\lambda )$ is a transfer matrix eigenvalue. It is
easy to remark that in the $\mathsf{N}$ points $\xi _{a}^{(2s_{a})}$, for
any $a\in \{1,...,\mathsf{N}\}$, the functional equation is automatically
satisfied. Finally, it is satisfied in the $\mathsf{N}+\mathsf{N}_{s}$
points $\xi _{a}^{(h_{a})}$ for any $h_{a}\in \{-1,...,2s_{a}-1\}$ and $a\in
\{1,...,\mathsf{N}\}$ if the homogeneous system:%
\begin{equation}
D_{t,n}\left( 
\begin{array}{l}
Q_{t}(\xi _{a}^{(0)}) \\ 
Q_{t}(\xi _{a}^{(1)}) \\ 
\vdots  \\ 
\vdots  \\ 
Q_{t}(\xi _{a}^{(2s_{n})})%
\end{array}%
\right) _{\left( 2s_{n}+1\right)}=\left( 
\begin{array}{l}
0 \\ 
\vdots  \\ 
\vdots  \\ 
\vdots  \\ 
0%
\end{array}%
\right) _{\left( 2s_{n}+1\right)}\text{ }\forall n\in \{1,...,%
\mathsf{N}\}.  \label{Homo-system}
\end{equation}%
is satisfied. As a consequence of the fact that $t(\lambda )$ is an
eigenvalue we have:%
\begin{equation}
\det_{2s_{n}+1}D_{t,n}=0\text{ \ \ \ }\forall n\in \{1,...,\mathsf{N}\},
\end{equation}%
so that the previous system is equivalent (for example) to the system of the
last $2s_{n}$ equations which can be resolved in terms of $Q_{t}(\xi
_{n}^{(2s_{n})})$ as it follows: 
\begin{equation}
Q_{t}(\xi _{n}^{(h_{n})})=\mathsf{Q}_{t,n}^{(h_{n})}Q_{t}(\xi
_{n}^{(2s_{n})}),\text{\ \ }\forall h_{n}\in \{0,...,2s_{n}-1\}
\label{Values of Q}
\end{equation}%
where we define $\mathsf{Q}_{t,n}^{(2s_{n})}=1$ and the others, due to the tridiagonal form of the matrix $D_{t,n}$,  are defined in a unique way recursively by: 
\begin{align}
\mathsf{Q}_{t,n}^{(h_{n}-1)}& =\frac{t(\xi _{n}^{(h_{n})})\mathsf{Q}%
_{t,n}^{(h_{n})}}{\mathsf{k}_{2}d(\xi _{n}^{(h_{n})})}-\frac{\mathsf{k}%
_{1}a(\xi _{n}^{(h_{n})})\mathsf{Q}_{t,n}^{(h_{n}+1)}}{\mathsf{k}_{2}d(\xi
_{n}^{(h_{n})})},\text{\ \ }\forall h_{n}\in \{1,...,2s_{n}-1\},
\label{Q-relation-h} \\
\mathsf{Q}_{t,n}^{(2s_{n}-1)}& =t(\xi _{n}^{(2s_{n})})/\mathsf{k}_{2}d(\xi
_{n}^{(2s_{n})}).  \label{Q-relation}
\end{align}%
Due to the tridiagonal form of the matrix $D_{t,n}$ these recursion relations can be solved explicitly in terms of the transfer matrix eigenvalues in a very simple way as:
\begin{equation}
\mathsf{Q}_{t,n}^{(2s_n - h_{n})} = \frac{\mathsf{k}%
_{2}^{- h_{n}}t^{(h_{n})}(\xi _{n}^{(2s_{n})})}{%
\prod_{k=0}^{h_{n}-1}d(\xi _{n}^{(2s_n - k)})},
\label{Q-Tfused}
\end{equation}
hence leading to the result \eqref{SoV-Q} by direct comparison with \eqref{EigenV-SoV2} with the convention that $t^{(1)}(\lambda) = t(\lambda)$. The proof can be done by an elementary induction. Indeed, the formula holds for $h_n = 1$ from the relation \eqref{Q-relation}. Suppose the formula is true up to a value $l \in \{1,...,2s_{n}-2\}$ then let us prove it for $l+1$. By using \eqref{Q-relation-h} and then applying the induction hypothesis for $l$ and $l-1$, we have:
\begin{align}
\mathsf{Q}_{t,n}^{(2s_n - (l +1))}&=\frac{t(\xi _{n}^{(2s_n - l)})(\mathsf{Q}%
_{t,n}^{(2s_n - l)}}{\mathsf{k}_{2}d(\xi _{n}^{(2s_n - l)})}-\frac{\mathsf{k}%
_{1}a(\xi _{n}^{(2s_n - l)})\mathsf{Q}_{t,n}^{(2s_n - (l-1))}}{\mathsf{k}_{2}d(\xi
_{n}^{(2s_n -l)})}\nonumber \\
&= \frac{\mathsf{k}_{2}^{- l} t(\xi _{n}^{(2s_n - l)}) t^{(l)}(\xi _n^{(2s_n )}) - \mathsf{k}_{1} a(\xi_n^{(2s_n - l)}) d (\xi_n^{(2s_n - l +1)})  \mathsf{k}_{2}^{- (l-1)} t^{(l-1)}(\xi _n^{(2s_n )})}{\mathsf{k}_{2} d(\xi_{n}^{(2s_n -l)})  \prod_{k=0}^{l-1} 
d(\xi _{n}^{(2s_n - k)})}\nonumber \\
&=\frac{\mathsf{k}_{2}^{- (l+1)} \{ t(\xi _{n}^{(2s_n - l)}) t^{(l)}(\xi _n^{(2s_n )}) -  \Delta^{(K)}_{\eta}(\xi _{n}^{(2s_n - l)})  t^{(l-1)}(\xi _n^{(2s_n )})\}}{ \prod_{k=0}^{l} d(\xi _{n}^{(2s_n - k)})} .
\label{Q-recursion}
\end{align}
Then, using \eqref{Trans-fus} at the point $\lambda = \xi _n^{(2s_n )}$, taking into account the fact that  $\xi _{n}^{(2s_n - l)} = \xi _n^{(2s_n )} + l\eta$, we get:
\begin{equation}
 t(\xi _{n}^{(2s_n - l)}) t^{(l)}(\xi _n^{(2s_n )}) -  \Delta^{(K)}_{\eta}(\xi _{n}^{(2s_n - l)})  t^{(l-1)}(\xi _n^{(2s_n )}) = t^{(l+1)}(\xi _n^{(2s_n )}),
\end{equation}
hence leading to the formula:
\begin{equation}
\mathsf{Q}_{t,n}^{(2s_n - l -1)} = \frac{\mathsf{k}%
_{2}^{-(l+1)}t^{(l+1)}(\xi _n^{(2s_n )})}{%
\prod_{k=0}^{l}d(\xi _{n}^{(2s_n - k)})},
\label{Q-Tfused-rec}
\end{equation}
which proves the induction hypothesis and hence  \eqref{Q-Tfused} for any $h_{n}\in \{1,...,2s_{n}-1\}$.\\

Therefore,  the existence of the polynomial $Q_{t}(\lambda )$ of the form $(\ref%
{Q-form})$ which satisfies with $t(\lambda )$ the quantum spectral curve
equation is reduced to the proof of the existence of a polynomial of maximal degree $\mathsf{N}_{s}$ that
interpolates the values $(\ref{Values of Q})$ in the $Q_{t}(\xi
_{n}^{(h_{n})})$ for any $h_{n}\in \{0,...,2s_{n}\}$ and $n\in \{1,...,%
\mathsf{N}\}$. From the form $(\ref{Q-form})$, we can use the following
interpolation formula in the $\mathsf{N}_{s}$ points $\xi_{n}^{(h_{n})}$ for $h_{n}\in \{1,...,2s_{n}\}$ and $n\in \{1,...,%
\mathsf{N}\}$ and in an additional point $\zeta$:%
\begin{equation}
Q_{t}(\lambda )=\sum_{a=1}^{\mathsf{N}}\sum_{h_{a}=1}^{2s_{a}}\frac{\lambda
-\zeta }{\xi _{a}^{(h_{a})}-\zeta }\underset{(b,k_{b})\neq (a,h_{a})}{%
\prod_{b=1}^{\mathsf{N}}\prod_{k_{b}=1}^{2s_{b}}}\frac{\lambda -\xi
_{b}^{(k_{b})}}{\xi _{a}^{(h_{a})}-\xi _{b}^{(k_{b})}}Q_{t}(\xi
_{a}^{(h_{a})})+\prod_{b=1}^{\mathsf{N}}\prod_{k_{b}=1}^{2s_{b}}\frac{%
\lambda -\xi _{b}^{(k_{b})}}{\zeta -\xi _{b}^{(k_{b})}}Q_{t}(\zeta ),
\label{Interp-Q}
\end{equation}%
where $\zeta $ is an arbitrary value different of $\xi _{n}^{(h_{n})}$\ for
any\ $h_{n}\in \{1,...,2s_{n}\}$ and $n\in \{1,...,\mathsf{N}\}$. Imposing
now that $Q_{t}(\lambda )$ satisfies \eqref{Values of Q} for any\ $%
h_{n}\in \{1,...,2s_{n}\}$ and $n\in \{1,...,\mathsf{N}\}$, we get:%
\begin{equation}
Q_{t}(\lambda )=\sum_{a=1}^{\mathsf{N}}\mathsf{q}_{t,a}\left(
\sum_{h_{a}=1}^{2s_{a}}\frac{\lambda -\zeta }{\xi _{a}^{(h_{a})}-\zeta }%
\underset{(b,k_{b})\neq (a,h_{a})}{\prod_{b=1}^{\mathsf{N}%
}\prod_{k_{b}=1}^{2s_{b}}}\frac{\lambda -\xi _{b}^{(k_{b})}}{\xi
_{a}^{(h_{a})}-\xi _{b}^{(k_{b})}}\mathsf{Q}_{t,a}^{(h_{a})}\right)
+\prod_{b=1}^{\mathsf{N}}\prod_{k_{b}=1}^{2s_{b}}\frac{\lambda -\xi
_{b}^{(k_{b})}}{\zeta -\xi _{b}^{(k_{b})}}\mathsf{q}_{t,0},
\label{Q-poly-In}
\end{equation}%
where we have introduced the notations:%
\begin{equation}
\mathsf{q}_{t,0}\equiv Q_{t}(\zeta ),\text{ \ }\mathsf{q}_{t,a}\equiv
Q_{t}(\xi _{a}^{(2s_{a})})\text{ }\forall a\in \{1,...,\mathsf{N}\},
\end{equation}%
and where the values $\mathsf{Q}_{t,a}^{(h_{a})}$ are given in terms of the transfer matrix eigenvalues by \eqref{Q-Tfused}. Hence, the polynomial $Q_t (\lambda)$ constructed from such an interpolation formula indeed takes the values $Q_{t}(\xi
_{a}^{(h_{a})})$ in the points $\lambda = \xi_{a}^{(h_{a})}$ for $h_{a}\in \{1,...,2s_{a}\}$ and $a\in \{1,...,\mathsf{N}\}$. Therefore what remains to be proven is that there exits a choice of the values $\mathsf{q}_{t,a}$, for $a\in \{1,...,\mathsf{N}\}$, such that the polynomial $Q_t (\lambda)$ constructed by \eqref{Q-poly-In} indeed takes the values $Q_t (\xi _{n}^{(0)})$ in the points $\xi _{n}^{(0)}$ that were not used in the interpolation formula \eqref{Q-poly-In}. So imposing these conditions indeed constitutes $\mathsf{N}$ constraints on the possible values of the $\mathsf{q}_{t,a}$, for $a\in \{1,...,\mathsf{N}\}$. 
So that we are left with the system of $\mathsf{N}$ equations obtained by 
imposing that the interpolation formula $(\ref{Q-poly-In})$ indeed satisfies the equations \eqref{Values of Q} in the points $h_{n}=0$ for $n\in
\{1,...,\mathsf{N}\}$. This is an homogeneous linear system of $\mathsf{N}$
equations in $\mathsf{N}+1$ unknowns, the $\mathsf{q}_{t,a}$ for any $a\in
\{0,...,\mathsf{N}\}$, or equivalently an inhomogeneous system of $\mathsf{N}
$ equations in the $\mathsf{N}$ unknowns, the $\mathsf{q}_{t,a}$ for any $%
a\in \{1,...,\mathsf{N}\}$ in terms of the normalization $\mathsf{q}_{t,0}$: 
\begin{equation}
\sum_{b=1}^{\mathsf{N}}[C_{\zeta }^{(t)}]_{ab}\,\mathsf{q}%
_{t,b}=-\prod_{c=1}^{\mathsf{N}}\prod_{k_{c}=1}^{2s_{c}}\frac{\xi
_{a}^{(0)}-\xi _{c}^{(k_{c})}}{\zeta -\xi _{c}^{(k_{c})}}\mathsf{q}_{t,0},
\label{systemB}
\end{equation}%
where the $\mathsf{N}\times \mathsf{N}$ matrix $[C_{\zeta }^{(t)}]_{ab}$ has
the following elements%
\begin{equation}
\lbrack C_{\zeta }^{(t)}]_{ab}=-\delta _{ab}\,\mathsf{Q}_{t,a}^{(0)}+%
\sum_{h_{b}=1}^{2s_{b}}\frac{\xi _{a}^{(0)}-\zeta }{\xi _{b}^{(h_{b})}-\zeta 
}\underset{(c,k_{c})\neq (b,h_{b})}{\prod_{c=1}^{\mathsf{N}%
}\prod_{k_{c}=1}^{2s_{c}}}\frac{\xi _{a}^{(0)}-\xi _{c}^{(k_{c})}}{\xi
_{b}^{(h_{b})}-\xi _{c}^{(k_{c})}}\mathsf{Q}_{t,b}^{(h_{b})},\qquad \forall
a,b\in \{1,\ldots ,\mathsf{N}\},
\end{equation}
where the coefficients $\mathsf{Q}_{t,b}^{(h_{b})}$ with $h_{b}\in \{0,...,2s_{b}\}$, for any $b \in \{1, \ldots, \mathsf{N}\}$, are given from the transfer matrix eigenvalues by \eqref{Q-Tfused} as the solution to the linear system \eqref{Homo-system}. 
The linear system \eqref{systemB} admits always one nonzero solution which produces one
polynomial $Q_{t}(\lambda )$ satisfying the functional equation \eqref{t-q-gl2} with $%
t(\lambda )$. This and the uniqueness of the polynomial solution implies that $%
\text{det}_{\mathsf{N}}[C_{\zeta }^{(t)}]$ is nonzero and finite for almost
any choice of $\zeta $. Then, for any given choice of $\mathsf{q}_{t,0}\neq 0
$, there exists one and only one nontrivial solution $(\mathsf{q}%
_{t,1},\ldots ,\mathsf{q}_{t,\mathsf{N}})$ of the system \eqref{systemB},
which is given by Cramer's rule:%
\begin{equation}
\mathsf{q}_{t,j}=\mathsf{q}_{t,0}\,\frac{\text{det}_{\mathsf{N}}[C_{\zeta
}^{(t)}(j)]}{\text{det}_{\mathsf{N}}[C_{\zeta }^{(t)}]},\qquad \forall
\,j\in \{1,\ldots ,{\mathsf{N}}\},
\end{equation}%
with matrices $C_{\zeta }^{(t)}(j)$ defined by 
\begin{equation}
\lbrack C_{\zeta }^{(t)}(j)]_{ab}=(1-\delta _{b,j})[C_{\zeta
}^{(t)}]_{ab}-\delta _{b,j}\prod_{b=1}^{\mathsf{N}}\prod_{k_{b}=1}^{2s_{b}}%
\frac{\xi _{a}^{(0)}-\xi _{b}^{(k_{b})}}{\zeta -\xi _{b}^{(k_{b})}},
\label{mat-cj}
\end{equation}%
for all $a,b\in \{1,\ldots ,\mathsf{N}\}$. Let us now prove that the
following conditions hold:%
\begin{equation}
\text{det}_{\mathsf{N}}[C_{\zeta }^{(t)}(j)]\neq 0\text{ }\forall j\in
\{1,...,\mathsf{N}\},  \label{Non-zero-det}
\end{equation}%
for almost any values of the parameters. Let us consider the case $\mathsf{k}%
_{1}=0$ then the transfer matrix reduces to $\mathsf{k}_{2}\mathsf{D}%
(\lambda )$, which coincides with $\mathsf{B}^{(K)}(\lambda )$ for $K=%
\mathsf{k}_{2}\sigma _{1}$, whose diagonalizability and spectrum simplicity
follows for example by Theorem 3.2. The spectrum explicitly reads:%
\begin{equation}
t_{\mathbf{h}}^{\left( \mathsf{k}_{1}=0,\mathsf{k}_{2}\neq 0\right)
}(\lambda )\equiv \mathsf{k}_{2}\prod_{n=1}^{\mathsf{N}}(\lambda -\xi
_{n}^{(h_{n})})\ \ \forall \mathbf{h\equiv }\{h_{1},\ldots ,h_{\mathsf{N}%
}\}\in \bigotimes_{n=1}^{\mathsf{N}}\{0,\ldots ,2s_{n}\}.
\end{equation}%
In this special case we can solve explicitly in the class of the polynomial
solutions the quantum spectral curve equation which reads:%
\begin{equation}
t(\lambda )Q_{t}(\lambda )-\mathsf{k}_{2}d(\lambda )Q_{t}(\lambda +\eta )=0.
\label{Fun-Eq-k1=0}
\end{equation}%
Indeed, the polynomial%
\begin{equation}
Q_{t_{\mathbf{h}}^{\left( \mathsf{k}_{1}=0,\mathsf{k}_{2}\neq 0\right)
}}(\lambda )=\prod_{n=1}^{\mathsf{N}}\prod_{k_{n}=1}^{h_{n}-1}(\lambda -\xi
_{n}^{(k_{n})}),
\end{equation}%
it is easily checked to satisfy the above functional equation $\left( \ref%
{Fun-Eq-k1=0}\right) $ with $t_{\mathbf{h}}^{\left( \mathsf{k}_{1}=0,\mathsf{%
k}_{2}\neq 0\right) }(\lambda )$ for any fixed $\mathbf{h}\in
\bigotimes_{n=1}^{\mathsf{N}}\{0,\ldots ,2s_{n}\}$. Moreover, it is the only
solution to the above equation for any fixed $\mathbf{h}$. This follows by
the general proof of uniqueness, given above by the quantum Wronskian argument,
or by rederiving it directly in this special case. Indeed, any polynomial
solution of $\left( \ref{Fun-Eq-k1=0}\right) $ has to have the factorized
form:%
\begin{equation}
\bar{Q}_{t_{\mathbf{h}}^{\left( \mathsf{k}_{1}=0,\mathsf{k}_{2}\neq 0\right)
}}(\lambda )=P(\lambda )Q_{t_{\mathbf{h}}^{\left( \mathsf{k}_{1}=0,\mathsf{k}%
_{2}\neq 0\right) }}(\lambda ),
\end{equation}%
for some polynomial $P(\lambda )$ which has to satisfy then the equation:%
\begin{equation}
P(\lambda )-P(\lambda +\eta )=0,
\end{equation}%
which is only possible for $P(\lambda )$ constant. Let us observe that all
our polynomials $Q_{t_{\mathbf{h}}^{\left( \mathsf{k}_{1}=0,\mathsf{k}%
_{2}\neq 0\right) }}(\lambda )$ are indeed of degree $\mathsf{M}\leq \mathsf{%
N}_{s}\equiv 2\sum_{n=1}^{\mathsf{N}}s_{n}$, so that for any fixed $\mathbf{h%
}$ we can interpolate them in $\mathsf{N}_{s}+1$ points according to the
interpolation formula $\left( \ref{Interp-Q}\right) $. Moreover, as the
couple $t_{\mathbf{h}}^{\left( \mathsf{k}_{1}=0,\mathsf{k}_{2}\neq 0\right)
}(\lambda )$ and $Q_{t_{\mathbf{h}}^{\left( \mathsf{k}_{1}=0,\mathsf{k}%
_{2}\neq 0\right) }}(\lambda )$ satisfies the functional equation $\left( %
\ref{Fun-Eq-k1=0}\right) $ then they satisfy also the homogeneous system of
equations $(\ref{Homo-system})$ or equivalently the system $(\ref{Values of Q})$
with $(\ref{Q-relation-h})$ and $(\ref{Q-relation})$, for $\mathsf{k}_{1}=0$%
. This means that the interpolation formula $(\ref{Q-poly-In})$ also holds
for our polynomial $Q_{t_{\mathbf{h}}^{\left( \mathsf{k}_{1}=0,\mathsf{k}%
_{2}\neq 0\right) }}(\lambda )$ and so imposing the condition $(\ref{Values
of Q})$ for any $h_{a}=0$, we get the linear system $(\ref{systemB})$, which
is satisfied for:%
\begin{equation}
\mathsf{q}_{t_{\mathbf{h}}^{\left( \mathsf{k}_{1}=0,\mathsf{k}_{2}\neq
0\right) },0}\equiv Q_{t_{\mathbf{h}}^{\left( \mathsf{k}_{1}=0,\mathsf{k}%
_{2}\neq 0\right) }}(\zeta )\neq 0,\text{ \ }\mathsf{q}_{t_{\mathbf{h}%
}^{\left( \mathsf{k}_{1}=0,\mathsf{k}_{2}\neq 0\right) },a}\equiv Q_{t_{%
\mathbf{h}}^{\left( \mathsf{k}_{1}=0,\mathsf{k}_{2}\neq 0\right) }}(\xi
_{a}^{(2s_{a})})\neq 0\text{ }\forall a\in \{1,...,\mathsf{N}\}.
\end{equation}%
This in turn implies that $(\ref{Non-zero-det})$ is satisfied for $\mathsf{k}%
_{1}=0,\mathsf{k}_{2}\neq 0$, being $\text{det}_{\mathsf{N}}[C_{\zeta
}^{(t)}]$ nonzero by the uniqueness of the polynomial solution. Now we have
just to observe that the transfer matrix eigenvalues are algebraic functions
in the parameter $\mathsf{k}_{1}$ so the same is true for the determinants $%
\text{det}_{\mathsf{N}}[C_{\zeta }^{(t)}]$ and det$_{\mathsf{N}}[C_{\zeta
}^{(t)}(j)]$ for any $j\in \{1,...,\mathsf{N}\}$. Then, by using the Lemma
B.1 of our article \cite{MaiN18a}, we can argue that being these
determinant nonzero at $\mathsf{k}_{1}=0$ for any transfer matrix eigenvalue
this implies that this must be true for almost any values of the parameters. 

Finally, let us note that $Q_t (\lambda)$ being a non zero (by construction) polynomial of maximal know degree, its highest coefficient can always be normalized to unity as required. 
\end{proof}

\subsection{Reconstruction of the  $Q$-operator}

On the basis of the results derived in the SoV framework we can present a
reconstruction of the $Q$-operator in terms  of the (fused) transfer 
matrices, more precisely it holds:

\begin{corollary}\label{Q-operator construction}
Let us assume that the twist matrix $K\in \End(\mathbb{C}^{2})$ is such that 
$\mathsf{k}_{1}\neq \mathsf{k}_{2}$, $\mathsf{k}_{1}\neq 0$, $\mathsf{k}%
_{2}\neq 0$. Then the polynomial family of commuting operators of maximal
degree $\mathsf{N}_{s}$ constructed from the fused transfer matrices:%
\begin{equation}
\mathsf{Q}(\lambda )=\frac{\text{det}_{\mathsf{N}}[C_{\zeta }^{(\mathsf{T}%
^{\left( K\right) })}+\Upsilon_{\zeta }^{(\mathsf{T}^{\left( K\right) })}(\lambda )]}{\text{det}_{\mathsf{N}%
}[C_{\zeta }^{(\mathsf{T}^{\left( K\right) })}]}\prod_{b=1}^{\mathsf{N}%
}\prod_{k=1}^{2s_{b}}\frac{\lambda -\xi _{b}^{(k)}}{\zeta -\xi _{b}^{(k)}},
\end{equation}%
is well defined for any fixed value\footnote{%
We can fix for example $\zeta =\xi _{a}^{\left( 0\right) }$ for any fixed $%
a\in \{1,\ldots ,\mathsf{N}\}$.} of $\zeta $ and for almost any values of $%
\{\xi _{i\leq \mathsf{N}}\}$ and of $\{\mathsf{k}_{j\leq 2}\}$, where we have defined:%
\begin{equation}
\lbrack C_{\zeta }^{(\mathsf{T}^{\left( K\right) })}]_{ab}=-\delta _{ab}\,%
\mathsf{Q}_{\mathsf{T}^{\left( K\right) },a}^{(0)}+\sum_{h=1}^{2s_{b}}\frac{%
\xi _{a}^{(0)}-\zeta }{\xi _{b}^{(h)}-\zeta }\underset{(c,k)\neq (b,h)}{%
\prod_{c=1}^{\mathsf{N}}\prod_{k=1}^{2s_{c}}}\frac{\xi _{a}^{(0)}-\xi
_{c}^{(k)}}{\xi _{b}^{(h)}-\xi _{c}^{(k)}}\mathsf{Q}_{\mathsf{T}^{\left(
K\right) },b}^{(h)}\qquad \forall a,b\in \{1,\ldots ,\mathsf{N}\},
\end{equation}%
with:%
\begin{equation}
\mathsf{Q}_{\mathsf{T}^{\left( K\right) },a}^{(h)}=\frac{\mathsf{T}%
^{(K|2s_{a}-h)}(\xi _{a}^{(2s_{a})})}{\mathsf{k}_{2}^{(2s_{a}-h)}%
\prod_{k=0}^{2s_{a}-(h+1)}d(\xi _{a}^{(2s_{a}-k)})},\qquad \forall \,h\in
\{1,\ldots ,2s_{a}\},a\in \{1,\ldots ,{\mathsf{N}}\},
\end{equation}%
and $\Upsilon_{\zeta }^{(\mathsf{T}^{\left( K\right) })}(\lambda )$ is the following rank one matrix:%
\begin{equation}
\lbrack \Upsilon_{\zeta }^{(\mathsf{T}^{\left( K\right) })}(\lambda )]_{ab}=\prod_{c=1}^{\mathsf{N}%
}\prod_{k=1}^{2s_{c}}(\xi _{a}^{(0)}-\xi _{c}^{(k_{c})})\sum_{h=1}^{2s_{b}}%
\frac{(\lambda -\zeta )\,\mathsf{Q}_{\mathsf{T}^{\left( K\right) },b}^{(h)}}{%
(\xi _{b}^{(h)}-\lambda )(\xi _{b}^{(h)}-\zeta )\underset{(c,k_{c})\neq (b,h)%
}{\prod_{c=1}^{\mathsf{N}}\prod_{k=1}^{2s_{b}}}(\xi _{b}^{(h)}-\xi
_{c}^{(k_{c})})}\text{ \ }\forall a,b\in \{1,\ldots ,\mathsf{N}\}.
\end{equation}%
Then the family $\mathsf{Q}(\lambda )$ is a $Q$-operator family, namely it satisfies, together with the transfer matrix $\mathsf{T}%
^{(K)}$ and the quantum determinant $\Delta^{(K)}_{\eta}$,  the quantum spectral curve equation at operator level:%
\begin{equation}
\alpha (\lambda )\mathsf{Q}(\lambda -2\eta )-\beta (\lambda )\mathsf{T}%
^{(K)}(\lambda -\eta )\mathsf{Q}(\lambda -\eta )+\Delta^{(K)}_{\eta}(\lambda )\mathsf{Q}(\lambda )=0,
\end{equation}%
and for any $a\in \{1,\ldots ,\mathsf{N}\}$ $\mathsf{Q}(\xi
_{a}^{(2s_{a})})$ are invertible operators.
\end{corollary}

\begin{proof}
We have shown in the previous theorem that a unique polynomial $%
Q_{t}(\lambda )$ of the form $\left( \ref{Q-form}\right) $ satisfies the
quantum spectral curve equation for any fixed $t(\lambda )$ eigenvalue of
the transfer matrix $\mathsf{T}^{\left( K\right) }(\lambda )$. Then by using
the reconstruction of $Q_{t}(\lambda )$ in the points $\xi _{a}^{(2s_{a})}$
for any $a\in \{1,\ldots ,\mathsf{N}\}$ and its interpolation formula $(\ref%
{Q-poly-In})$ it is easy to prove that the following determinant
representation holds:%
\begin{equation}
Q_{t}(\lambda )=\frac{\text{det}_{\mathsf{N}}[C_{\zeta }^{(t)}+\Upsilon_{\zeta }^{(t)}(\lambda )]}{\text{det}_{\mathsf{N}}[C_{\zeta }^{(t)}]}\prod_{b=1}^{%
\mathsf{N}}\prod_{k=1}^{2s_{b}}\frac{\lambda -\xi _{b}^{(k)}}{\zeta -\xi
_{b}^{(k)}},
\end{equation}%
where we have replaced any transfer matrix $\mathsf{T}^{\left( K\right)
}(\xi _{a})$ by its eigenvalue $t(\xi _{a})$ in the above defined
matrices. Now as a corollary of Proposition 2.5 of our first paper \cite%
{MaiN18}, we know that the transfer matrix $\mathsf{T}^{\left( K\right)
}(\lambda )$ is diagonalizable and with simple spectrum for almost any value
of the parameters $\{\xi _{i\leq \mathsf{N}}^{(0)}\}$ and $\{\mathsf{k}%
_{j\leq 2}\}$ for the twist matrix $K$ diagonalizable and with simple
spectrum. The polynomial operator family $\mathsf{Q}(\lambda )$ can be then
uniquely defined by its action on the eigenbasis of the transfer matrix as
it follows:%
\begin{equation}
\mathsf{Q}(\lambda )|t\rangle =|t\rangle Q_{t}(\lambda ),
\end{equation}%
for any $t(\lambda )$ eigenvalue and uniquely (up to normalization)
associated eigenvector $|t\rangle $ of the transfer matrix $\mathsf{T}%
^{\left( K\right) }(\lambda )$. It is then evident that this operator family
satisfies by definition the quantum spectral curve equation with the
transfer matrices, that it admits the given determinant representation in
terms of the transfer matrix $\mathsf{T}^{\left( K\right) }(\lambda )$ and
that it is invertible in the points $\{\xi _{a\leq \mathsf{N}}^{(2s_{a})}\}$.
\end{proof}

\subsection{On the general role of the $Q$-operator as SoV basis generator%
}

The known results on SoV in the literature and, in particular, our general
construction of the SoV bases allows to show that for a large class of
integrable quantum models associated to finite dimensional representation of
the Yang-Baxter, reflection algebra or dynamical generalization of them, the
transfer matrix (or some simple extension of it) defines a diagonalizable
and simple spectrum one parameter family of commuting operators. Moreover,
for the same class of models we know for a fixed normalization of the
eigenvectors that the corresponding wave functions admit the following
factorized form:%
\begin{equation}
\langle h_{1},...,h_{N}|t\rangle =\prod_{a=1}^{\mathsf{N}}Q_{t}(\xi
_{a}^{(h_{a})}),\text{ \ }\forall \,h_{a}\in \{1,\ldots ,d_{a}\},a\in
\{1,\ldots ,{\mathsf{N}}\}
\end{equation}%
for a quantum space of dimension $\prod_{a=1}^{\mathsf{N}}d_{a}$, where  $%
Q_{t}(\lambda )$ is the eigenvalue of the $Q$-operator $\mathsf{Q}(\lambda )$
associated to the given transfer matrix eigenvalue and the $\xi
_{a}^{(h_{a})}$ is the spectrum of the quantum separate variables $Y_{a}$
satisfying:%
\begin{equation}
\langle h_{1},...,h_{\mathsf{N}}|Y_{a}=\xi _{a}^{(h_{a})}\langle h_{1},...,h_{\mathsf{N}}|,%
\text{ \ }\forall a\in \{1,\ldots ,{\mathsf{N}}\}.
\end{equation}%
Once we combine the diagonalizability and simplicity of the transfer matrix
spectrum and the SoV representation of the transfer matrix eigenco-vectors
then it is clear that the following statement holds:

There exists a co-vector $\langle L|$ such that:%
\begin{equation}
\langle h_{1},...,h_{\mathsf{N}}|=\langle L|\prod_{a=1}^{\mathsf{N}}\mathsf{Q}(\xi
_{a}^{(h_{a})}),\text{ \ }\forall \,h_{a}\in \{1,\ldots ,d_{a}\},a\in
\{1,\ldots ,{\mathsf{N}}\}  \label{Q-SoV}
\end{equation}%
indeed:%
\begin{equation}
\langle L|\prod_{a=1}^{\mathsf{N}}\mathsf{Q}(\xi _{a}^{(h_{a})})|t\rangle
=\prod_{a=1}^{\mathsf{N}}Q_{t}(\xi _{a}^{(h_{a})})\langle L|t\rangle ,\text{
\ }\forall \,h_{a}\in \{1,\ldots ,d_{a}\},a\in \{1,\ldots ,{\mathsf{N}}\},
\end{equation}%
clearly the definition of $\langle L|$ is fixed up to the choice of the
normalization of all the transfer matrix eigenvectors. These observations
naturally lead to the idea that we are able to construct an SoV basis once the $Q$%
-operator is known. There is anyhow an important comments we have to make,
i.e. some further informations are indeed required beyond the knowledge of
the $Q$-operator. In particular, the right choice of the vector $\langle L|$
which has to satisfy the condition:%
\begin{equation}
\langle L|t\rangle \neq 0\text{ \ }%\forall t(\lambda )\in \Sigma _{\mathsf{T}^{(K)}}
\label{L-criterion}
\end{equation}
for $|t\rangle$ any transfer matrix eigenvector and even more importantly we have to have a criterion to chose the spectrum
of the separate variables. In our SoV construction there are indeed the fusion
relations and the fact that they are simplified for some specific choice of
the values of the spectral parameter to guide us to the proper choice of the
spectrum of the quantum separate variables.

In the non-fundamental representations we considered in this article, we can now make the
above description of the SoV basis construction using the $Q$-operator. In
particular, we have the following:

\begin{corollary}
Let us assume that the twist matrix $K\in \End(\mathbb{C}^{2})$ is
such that $\mathsf{k}_{1}\neq \mathsf{k}_{2}$, $\mathsf{k}_{1}\neq 0$, $%
\mathsf{k}_{2}\neq 0$, then for almost any choice of $\langle L|$ the set of co-vectors defined in \rf{Q-SoV} defines an SoV co-vector basis which coincides, thanks to equations \ref{Values of Q} and \ref{Q-Tfused}, with that defined in Theorem \ref{Second-basis}, once we fix:%
\begin{equation}
\langle L|=\langle S|\prod_{n=1}^{\mathsf{N}}\mathsf{Q}%
^{-1}(\xi _{n}^{(2s_{n})}).
\end{equation}
Moreover, the set of co-vectors \rf{Q-SoV} coincides with Sklyanin's SoV basis under the choices of $\langle S|$ described in Theorem \ref{Second-basis}. 
\end{corollary}
Let us look now at the linear action of the transfer matrix on the basis \eqref{Q-SoV} generated by the $Q$-operator. It follows directly from the $T-Q$ relation that gives the product  $\mathsf{T}^{\left( K\right) }(\lambda )  Q(\lambda)$ as a linear combination of $Q(\lambda + \eta)$ and $Q(\lambda - \eta)$, more precisely we get:
\begin{equation}
\mathsf{T}^{\left( K\right) }(\xi_{a}^{(h_{a})})  Q(\xi_{a}^{(h_{a})}) = \mathsf{k}_{1}a(\xi _{n}^{(h_{n})}) Q(\xi_{a}^{(h_{a})} - \eta) + \mathsf{k}_{2}d(\xi _{n}^{(h_{n})}) Q(\xi_{a}^{(h_{a})} + \eta) \, .
\end{equation}
Now taking into account $\xi_{a}^{(h_{a})} \mp \eta = \xi_{a}^{(h_{a} \pm 1)} $ it leads directly  to the following explicit linear action on  the SoV basis \eqref{Q-SoV} as:
\begin{equation}
\langle h_{1},...,h_{\mathsf{N}}|\mathsf{T}^{(K)}(\xi _{n}^{(h_{n})})=%
\mathsf{k}_{1}a(\xi _{n}^{(h_{n})})\langle h_{1},...,h_{n}+1,...,h_{\mathsf{N%
}}|+\mathsf{k}_{2}d(\xi _{n}^{(h_{n})})\langle h_{1},...,h_{n}-1,...,h_{%
\mathsf{N}}|,  
\end{equation}
which coincides with \eqref{Bax-SoV}. It should be noted here that whenever the index $h_n$ reaches the value $0$ or $2s_n$ the corresponding coefficient vanishes ensuring that the action never goes out of the basis. Namely it can be easily verified that $a(\xi _{n}^{(2s_{n})})=0$ and similarly $d(\xi _{n}^{(0)})=0$. It is then sufficient to get the explicit action of $\mathsf{T}^{\left( K\right) }(\lambda)$ to use its expression following from the interpolation formula for the operator  $\mathsf{T}^{\left( K\right) }(\lambda )$ in the points $\xi _{n}^{(h_{n})}$ for $n=1, ..., N$. Hence the role of the $T-Q$ equation is both to give the $T$-spectrum characterization and at the same time the closure relation leading to its explicit linear action on the SoV basis generated by the $Q$-operator. This is very similar to the Frobenius scheme that we advocated in \cite{MaiN18}, here with the $T-Q$-equation playing the natural role of the characteristic equation for the transfer matrix. 

A last remark about the construction of the SoV basis starting from the $Q$%
-operator should be outlined. For several integrable quantum models it is in
fact the construction of the SoV basis and the corresponding
characterization of the transfer matrix spectrum that allow for the explicit
construction of the $Q$-operator, as we have explained in this article.

\section{Conclusion}
In the present paper we have shown how to construct different SoV bases for the integrable models associated to higher spin representations of the $Y(gl_2)$ algebra. Here, we would like to comment on some general picture relating these SoV bases. In our scheme developed first in \cite{MaiN18} an SoV basis of such a model is characterized by a chosen reference co-vector $\langle L|$, satisfying the condition \rf{L-criterion}, and a set of commuting conserved charges $\mathsf{T}_i^{(h_i)}$, with $i=1, ..., N$ and $h_i=0, ..., 2s_i$, such that the set of co-vectors:
\begin{equation}
\label{SoV-basis-gen}
\langle L| \prod_{i=1}^N \mathsf{T}_i^{(h_i)} \equiv  {}_{\mathsf{T}}\langle h_1, ..., h_N| \equiv  {}_{\mathsf{T}}\langle {\bold h}|\, ,
\end{equation}
forms a basis of the Hilbert space dual ${\cal H}^*$ of the model at hand. Moreover, the conserved charges are obtained from some transfer matrix of the model, stemming from the Yang-Baxter algebra, and satisfy a set of low-degree algebraic equations determining their common spectrum. In the cases examined in this paper, these are given by the fusion relations for the tower of fused transfer matrices and define a system of $N$ polynomial equations of order $2s+1$ in $N$ unknowns, if all $s_i = s$. %Typically, if all $s_i = s$, the dimension of the Hilbert space is $(2s+1)^N$ while these algebraic equations involved a polynomial in $N$ number of terms. 
The low degree of these algebraic equations determining the spectrum of the transfer matrix (and also its action on the constructed SoV basis) in comparison to the dimension of the Hilbert space ($(2s+1)^N$ in the case here considered) is the hallmark of integrability. Indeed, in general, the degree of the characteristic polynomial of an operator is equal to the dimension of the Hilbert space; the very fact that there exists lower degree equations giving the spectrum is directly related to the Yang-Baxter algebra structure underlying the integrability properties of the considered models.  

Now, suppose we have an SoV basis as described above; what freedom do we have for constructing  different SoV bases? There are two obvious ways to do so:\\
%\begin{itemize}

(i) Change the reference co-vector $\langle L| $ for another co-vector, say $\langle {\tilde L}| $.\\

(ii) Change the set of commuting conserved charges $\mathsf{T}_i^{(h_i)}$ to a new set, say $\mathsf{\tilde T}_i^{(k_i)}$.\\

%\end{itemize}
Let us stress that in the second case, while changing the set of commuting conserved charges, we need to insure that again the spectrum of the transfer matrix is given by low degree (in the above described way) algebraic equations that also play the role of closure relations for the action of the transfer matrix on the new basis. In the present paper we have been giving examples of the two above ways, and even of combined effects of the two. Before commenting directly on the examples given in the present paper, let us examine on general grounds these two possibilities more closely. \\

Let us first suppose that we have two different reference states $\langle L| $ and  $\langle {\tilde L}| $ for which the set of co-vectors  \eqref{SoV-basis-gen} forms a basis of ${\cal H}^*$ and such that the set of co-vectors defined by:
\begin{equation}
\label{SoV-basis-gen-tilde}
\langle {\tilde L}| \prod_{i=1}^N \mathsf{T}_i^{(h_i)}\, ,
\end{equation}
is also a basis of ${\cal H}^*$. Then we can infer that the change of basis between the two sets is given by the action of an invertible operator, say ${\mathsf{T}}_{L, {\tilde L}}$, that is a conserved charge commuting with the transfer matrix. Indeed, \eqref{SoV-basis-gen} being a basis, there exist coefficients ${\tilde l}_{\bold h}$ such that:
\begin{equation}\label{decomposition-ref-covec}
\langle {\tilde L}| = \sum_{\bold h}   {}_{\mathsf{T}}\langle {\bold h}| {\tilde l}_{\bold h}\, .
\end{equation}
Hence, defining the conserved charge: 
\begin{equation}
{\mathsf{T}}_{L, {\tilde L}} = \sum_{\bold h} {\tilde l}_{\bold h} \mathsf{T}_{\bold h}\, ,
\end{equation}
with $\mathsf{T}_{\bold h} \equiv \prod_{i=1}^N \mathsf{T}_i^{(h_i)}$, we get:
\begin{equation}
\langle {\tilde L}| \prod_{i=1}^N \mathsf{T}_i^{(h_i)} = \langle L| \prod_{i=1}^N \mathsf{T}_i^{(h_i)}\, .\,  {\mathsf{T}}_{L, {\tilde L}}
\end{equation}
Hence any change of SoV basis obtained by changing the reference co-vector $\langle L|$, is generated by the action of some conserved charge commuting with the transfer matrix. 

Let us remark, moreover, that the invertibility of the charge ${\mathsf{T}}_{L,{\tilde{%
L}}}$ is the necessary and sufficient condition to guarantee that $\langle {%
\tilde{L}}|$ satisfies the condition \rf{L-criterion}, once $\langle {L}|$ satisfies it, being
\begin{equation}
\langle {\tilde{L}}|t\rangle =\mathsf{t}_{L,{\tilde{L}}}\langle L|t\rangle ,
\end{equation}%
where $\mathsf{t}_{L,{\tilde{L}}}$ is the eigenvalue of the charge ${\mathsf{%
T}}_{L,{\tilde{L}}}$ on the generic transfer matrix eigenvector $|t\rangle $.

For both the type of SoV bases here introduced, we have been able to
identify explicitly both reference co-vectors satisfying \rf{decomposition-ref-covec} and the set of
charges $\mathsf{T}_{i}^{(h_{i})}$. Then, we can explicitly write the
condition on $\langle {\tilde{L}}|$ to be a proper reference co-vector as a
condition on its decomposition in these explicit SoV bases. That is, it must
holds:%
\begin{equation}
\det \left[ \sum_{\mathbf{h}}{\tilde{l}}_{\mathbf{h}}^{\left( 1\right)
}\prod_{n=1}^{\mathsf{N}}(\mathsf{T}^{(K|2s_{n})}(\xi
_{n}^{(2s_{n}-1)}))^{h_{n}}\right] \neq 0,
\end{equation}%
where ${\tilde{l}}_{\mathbf{h}}^{\left( 1\right) }$ are the coefficients
defined in \rf{decomposition-ref-covec} taking as original SoV-basis the first one associated to the choice of $\langle {\Omega }|$ given in \rf{Tensor-S}. Similarly, it
must holds: 
\begin{equation}
\det \left[ \sum_{\mathbf{h}}{\tilde{l}}_{\mathbf{h}}^{\left( 2\right)
}\prod_{n=1}^{\mathsf{N}}\frac{\mathsf{k}_{2}^{h_{n}-2s_{n}}\mathsf{T}%
^{(K|2s_{n}-h_{n})}(\xi _{n}^{(2s_{n})})}{\prod_{k=0}^{2s_{n}-h_{n}-1}d(\xi
_{n}^{(2s_{n}-k)})}\right] \neq 0,
\end{equation}%
where ${\tilde{l}}_{\mathbf{h}}^{\left( 2\right) }$ are the coefficients
defined in \rf{decomposition-ref-covec} taking as original SoV-basis the second one associated to the choice of $\langle {S}|$ given in \rf{Sk-identification1} or \rf{Sk-identification2}. Then, in agreement with our finding
in Proposition \rf{First-SoV-Basis} and in Theorem \rf{Second-basis}, we confirm that almost any choice of the co-vector $\langle {\tilde{L}}|$ defines a proper reference co-vector satisfying  the condition \rf{L-criterion}.\\

Let us now consider the second procedure, namely let us suppose we have two SoV bases sharing the same reference co-vector $\langle L|$ but associated to two different sets of commuting conserved charges, $\mathsf{T}_{\bold h}$ and $\mathsf{\tilde T}_{\bold k}$. Then it means that each of these sets of commuting conserved charges are both bases of the vector space ${\cal C}_{\mathsf{T}}$ of operators commuting with the transfer matrix $\mathsf{T}(\lambda)$ of the model considered. Due to simple spectrum property of the transfer matrix, that follows from the existence of such an SoV basis, the dimension of the vector space ${\cal C}_{\mathsf{T}}$ is equal to the dimension of the Hilbert space on which these operators are acting upon. Hence, there exists a matrix $M_{{\bold h},{\bold k}}$ defining the corresponding change of basis such that:
\begin{equation}
\mathsf{\tilde T}_{\bold k} = \sum_{\bold h} \mathsf{T}_{\bold h} M_{{\bold h},{\bold k}}\, ,
\end{equation}
which also gives the change of basis in ${\cal H}^*$ as we also have:
\begin{equation}
\label{new-tilde-T}
\langle L|\mathsf{\tilde T}_{\bold k} \equiv  {}_{\mathsf{\tilde T}}\langle {\bold k}| = \sum_{\bold h} {}_{\mathsf{T}}\langle {\bold h}| M_{{\bold h},{\bold k}}\, .
\end{equation}
Note that the change of basis matrix $M_{{\bold h},{\bold k}}$ also gives the linear decomposition of the commuting conserved charges $\mathsf{\tilde T}_{\bold k}$ on the first set $\mathsf{T}_{\bold h} $. In that respect, one cannot pickup  an arbitrary invertible matrix $M_{{\bold h},{\bold k}}$ and expect that it will be defining a new SoV basis; indeed, the size of this matrix being the dimension of the Hilbert space, one should further insure that the new set of commuting conserved charges generated by the formulae \eqref{new-tilde-T} still satisfy a system of algebraic equations of low degree in the above defined sense generating their spectrum, namely involving a number of terms much less than the dimension of the Hilbert space.

Finally, let us stress that in practice, we can have both effects (i) and (ii) when going from one SoV basis to another one. In fact one example of this combined effect is our second construction of the SoV basis for arbitrary spin representations given respectively in \eqref{SoV-basis-1} and \eqref{SoV-basis-2}. In that case the two bases are just identical via the change of both the set of conserved charges used and also a change of the reference co-vector. The change of basis between our first construction \eqref{SoV-basis-0}, using the fundamental transfer matrices and the second construction using either the tower of the fused transfer matrices or the $Q$-operator,  is more involved, but both construction can be obtained through determinant formulae giving the fused transfer matrices in terms of the original transfer matrix $\mathsf{T}^{(K|1)}(\lambda )$, and hence can be related using the fusion relations. The main advantage of our second construction over our first construction is that the action of the transfer matrix $\mathsf{T}^{(K|1)}(\lambda )$ becomes very simple, namely explicitly linear, to be compared to its action on the first SoV basis \eqref{SoV-basis-0} which is more involved. In fact it is only in the second SoV basis that the action of the original transfer matrix $\mathsf{T}^{(K|1)}(\lambda )$ on the SoV basis is given by the same algebraic relation that also characterizes its spectrum, i.e., by the $TQ$-equation. Hence, it appears clearly that the most "natural" SoV bases are the ones for which this situation is realized. And for models associated to higher-spin representations of the $Y(gl_2)$ quantum algebra, we have shown in \eqref{Q-SoV} that they are obtained through commuting conserved charges generated by the Baxter $Q$-operator, here, explicitly constructed in terms of the fused transfer matrices in Corollary \ref{Q-operator construction}. 

\section*{Acknowledgments}

J. M. M. and G. N. are supported by CNRS and ENS de Lyon.

\appendix

\section{Appendix}

Let us define the following general tridiagonal matrix ${\cal T}_N^{(1,N)}$ :
\begin{equation}
{\cal T}_N^{(1,N)} = \left( 
\begin{array}{cccc}
a_1 & b_1 & & 0\\
c_1 & \ddots & \ddots & \\
& \ddots & \ddots & b_{N-1} \\
0 & & c_{N-1} & a_N 
\end{array} 
\right) \, ,
\end{equation}
and let us denote by $\Lambda_N^{(1,N)}$ its determinant. Expanding this determinant by its column $j$, we have:
\begin{equation}
\Lambda_N^{(1,N)} = a_j \Lambda_{j-1}^{(1,j-1)} \Lambda_{N-j}^{(j+1,N)} -b_{j-1} \Theta_{N-1}^{(1)} -c_j \Theta_{N-1}^{(2)}\, .
\end{equation}
Then we can expand the two determinants $\Theta_{N-1}^{(1)}$ and $\Theta_{N-1}^{(1)}$ respectively along their rows $j-1$ and $j$ respectively, to get in both cases a sum of two contributions, one being zero and the other being a product of two determinants of tridiagonal matrices of the above form to get:
\begin{equation}
\Lambda_N^{(1,N)} = a_j \Lambda_{j-1}^{(1,j-1)} \Lambda_{N-j}^{(j+1,N)} -b_{j-1}c_{j-1} \Lambda_{j-2}^{(1,j-2)} \Lambda_{N-j}^{(j+1,N)} - b_j c_j \Lambda_{j-1}^{(1,j-1)} \Lambda_{N-j-1}^{(j+2,N)}\, ,
\label{tridiag-j}
\end{equation}
hence leading to a relation expressing the determinant of the tridiagonal matrix  ${\cal T}_N^{(1,N)}$, namely $\Lambda_N^{(1,N)}$, in terms of  determinants of some of its sub-tridiagonal matrices. Taking the particular case where $j=N$, and with the obvious conventions that the coefficients $b_j$ and $c_j$, for $j \leq 0$ or for $j\geq N$ are identically zero, while the corresponding $a_j$ are equal to one if $j\leq 0$ or $j> N$, we get back to  the traditional recursion relation for determinants of tridiagonal matrices:
\begin{equation}
\Lambda_N^{(1,N)}=a_N \Lambda_{N-1}^{(1,N-1)} - b_{N-1}c_{N-1} \Lambda_{N-2}^{(1,N-2)}. 
\end{equation}

\end{document}